\documentclass[runningheads]{llncs}
\usepackage[T1]{fontenc}

\usepackage{enumitem}
\usepackage{tikz}
\usepackage{todonotes}
\usetikzlibrary{shapes.misc, decorations.markings, calc}

\tikzset{cross/.style={cross out, draw=black, minimum size=2*(#1-\pgflinewidth), inner sep=0pt, outer sep=0pt},
cross/.default={2.7pt}}

\usepackage{amsmath}
\usepackage{amsfonts}
\usepackage{amssymb}
\usepackage{mathtools}

\spnewtheorem{protocol}{Protocol}{\bfseries}{\itshape}

\spnewtheorem*{mainthm}{Theorem~\ref{th:main}}{\normalfont\bfseries}{\itshape}
\spnewtheorem*{sndthm}{Theorem~\ref{th:snd}}{\normalfont\bfseries}{\itshape}

\usepackage[hypertexnames=false]{hyperref}
\usepackage{color}

\DeclareMathOperator{\subop}{sub}
\newcommand{\subc}{\subop}

\newif\ifextended
\extendedtrue
\begin{document}

%\title{The Solvability of Approximate Agreement on Simplicial Complexes}
%\title{Solving Approximate Agreement on CUB Spaces and Simplicial Complexes}
%\newcommand{\mytitle}{Solving Approximate Agreement on CUB Spaces and Simplicial Complexes}
\newcommand{\mytitle}{Solving Approximate Agreement on continuous and discrete spaces}
%\\ or\\ On which space can approximate agreement be solved?}

\title{\mytitle}

\author{Augustin Albert\inst{1} \and Sergio Rajsbaum\inst{2}}

\institute{LIX, CNRS, École polytechnique, Institut Polytechnique de Paris, Palaiseau, France
\email{augustin.albert@polytechnique.edu} \and
IRIF-CNRS-Universit\'e Paris Cit\'e on leave from Instituto de Matemáticas, UNAM,  Mexico,
\email{sergio.rajsbaum@gmail.com}\thanks{Support by UNAM-PAPIIT  37-IN109926} }

\authorrunning{Augustin Albert \and Sergio Rajsbaum}

\titlerunning{\mytitle}

\maketitle

\begin{abstract}

We consider $n$ asynchronous processes prone to crashes, communicating via shared read-write registers, and study the wait-free solvability of approximate agreement: given inputs, processes must output values that are close to each other, and satisfy a validity property (at the very least, if inputs are identical, all outputs must equal that input).

The problem has been studied for various input spaces: continuous, discrete, one-dimensional or multidimensional. For metric spaces, validity requires outputs to lie in the convex hull of the inputs. For graphs, and more generally simplicial complexes, several conditions exist.  We focus on simplex agreement and validity: outputs must span a simplex, and if inputs span a simplex $\sigma$ then outputs are in $\sigma$.

Solvability depends on the input space, validity condition, and number of processes. For example, the problem is solvable for all $n$ in the plane, but only for $n \leq 2$ when removing a point. For a graph, solvability for $n=2$ holds iff the graph is connected, but $n\geq 3$ requires acyclicity. 

In the discrete case, we prove that simplex agreement on a simplicial complex $\mathcal{C}$ is solvable for $n+1$ processes iff $\mathcal{C}$ is $(n-1)$-connected. We discuss several consequences, including a proof of a conjecture by Ledent.

In the continuous setting, we consider CUB spaces: a broad class of metric spaces admitting a unique convexity definition, subsuming classical $\epsilon$-agreement on $[0,1]$ and $m$-dimensional approximate agreement. Our results show that $\epsilon$-agreement is solvable in every CUB space.

\keywords{Approximate agreement \and Combinatorial topology \and CUB spaces \and $n$-connectivity \and Clique complexes}
\end{abstract}

\section{Introduction}

Approximate agreement has been thoroughly studied due to its theoretical and practical importance, in almost any model of distributed computing. It has recently attracted renewed interest from a wide range of application areas, including avionics~\cite{stolz2016byzantine}, robot coordination~\cite{herlihy2013distributed,alcantara2019topology}, %Potop2011
and distributed learning~\cite{su2016fault}.%,el2021collaborative}.

We consider the problem in a system of $n$ asynchronous processes, prone to crash failures, communicating through read-write shared registers. We concentrate on the \emph{wait-free} case (any number of processes may crash) due to its theoretical importance.

In the original problem, $\epsilon$-approximate agreement~\cite{dolev1986reaching}, processes are each given as input a point in $\mathbb R$, and after communicating a finite number of rounds, each process must output a point in the convex hull of the inputs, such that outputs are $\epsilon$-close to each other. Since then, the problem has been studied under a variety of input spaces: continuous and discrete, one-dimensional and multidimensional. For instance, $m$-dimensional approximate agreement~\cite{mendes2015multidimensional} considers input values in $\mathbb R^m$, and generalized approximate agreement~\cite{herlihy1993asynchronous} in a subset of $\mathbb R^m$. More recently, approximate agreements on a graph $G$~\cite{alcantara2019topology,castaneda2018convergence,NowakR19,Ledent21,Liu22,alistarh2023wait} consider as inputs the vertices of $G$. 
While $\mathbb R^m$, and more generally Euclidean spaces, are equipped with canonical notions of distance and convexity, this is not always the case on discrete structures like graphs. This has led to the definition of different \emph{validity} and \emph{agreement} conditions. For instance, edge~\cite{castaneda2018convergence} (resp. clique~\cite{alcantara2019topology}) agreement uses the following conditions:
\begin{itemize}[noitemsep]
    \item Validity: if the input vertices span an edge (resp. clique), then the output vertices are among the inputs.
    \item Agreement: output vertices span an edge (resp. clique).
\end{itemize}

We  focus on simplex agreement~\cite{Ledent21}, another discrete instance of approximate agreement, on simplicial complexes, the higher-dimensional counterpart to graphs. In a simplicial complex one may think of the vertices belonging to a simplex as being close to each other, in the same way as in a graph an edge represents nearness of its vertices.
\begin{definition}[Simplex agreement~\cite{Ledent21}\footnote{Note that this task is unrelated to the simplex agreement task considered in~\cite{HerlihyShavitJACM}.}]
Let $\mathcal C$ be a finite non-empty simplicial complex. Each process is given an input vertex $x_i \in \mathcal C$, and must output a vertex $y_i \in \mathcal C$, such that the following two conditions are satisfied:
\begin{itemize}[noitemsep]
\item Validity: if the $(x_i)_{i \in [n]}$ span a simplex $\sigma$ of $\mathcal C$, then all $y_i$ are in $\sigma$.
\item Agreement: the $(y_i)_{i \in [n]}$ span a simplex of $\mathcal C$.
\end{itemize}
\end{definition} 

Simplex agreement generalizes several other approximate agreement problems, most notably barycentric agreement~\cite{herlihy2013distributed}, edge and clique agreement. Indeed, any graph $G$ can be regarded as a $1$-dimensional complex, or as its clique complex $\operatorname{Cl(G)}$. Simplex agreement on $G$ and $\operatorname{Cl(G)}$ then recovers edge and clique approximate agreement on $G$, respectively.

We study the solvability of these tasks, that is, whether there exists an algorithm (a \emph{protocol}) that specifies, for each process, what operations to perform based on its local state, and that guarantees termination as well as the validity and agreement conditions in every execution.

The solvability of approximate agreement depends on the input space, the validity condition, and the number of processes.
Exact characterizations of solvability are not always known, but there is a general intuition: % that has appeared in various forms in the past
approximate agreement is solvable if and only if the space does not have ``holes'': $\epsilon$-approximate agreement and $m$-dimensional agreement are always solvable~\cite{dolev1986reaching,mendes2015multidimensional}, %and indeed the $d$-dimensional Euclidean space has no ``holes,''
 while clique agreement is solvable for all $n$ on trees and graphs whose clique graph is a tree, and for $n\leq 2$ on connected graphs~\cite{castaneda2018convergence,alcantara2019topology}.
Moreover, $\epsilon$-generalized $2$-dimensional $\epsilon$-agreement is claimed to be unsolvable with holes of radius $\epsilon$~\cite{herlihy1993asynchronous}, and edge and clique agreement are unsolvable for $n \geq 3$ on graphs with cycles of length $k \geq 4$~\cite{castaneda2018convergence}.

A ``hole'' is considered as an obstruction to solvability. However, there are actually different (related) ways in which a space can have a  ``hole''. In~\cite{herlihy1993asynchronous}, $\epsilon$-holes are $\epsilon$-isolated points. For graphs, an intuitive notion is that of $k$-cycles for $k > 3$. 
Several algebraic notions, discussed in Sections~\ref{section:complexes} and~\ref{section:combinatorial}, are studied in algebraic topology, and used in distributed computing. Those can be computable~\cite{Havlicek2000}, %too much citations already!,algebraicSpans},
for example via homology, or non-computable, like \emph{contractibility}, used in the conjecture of Ledent~\cite{Ledent21}: % considered the non-computable notion of contractibility in his conjecture.

\begin{conjecture}[Solvability of simplex agreement~\cite{Ledent21}]\label{conj:simplex}
Let $\mathcal{C}$ be a finite non-empty simplicial complex. Simplex agreement on $\mathcal C$ is solvable for all numbers $n$ of processes if and only if $\mathcal C$ is contractible.
\end{conjecture}

Conjecture~\ref{conj:simplex} exemplifies the advantages of such general problems: by abstracting away the specifics of each variant, general solvability criteria appear. Just as simplex agreement generalizes several discrete variants of approximate agreement, one of the goals of this paper is to provide a more general continuous approximate agreement task, not only on Euclidean spaces, but on all metric spaces satisfying ``convexity'' properties. Our second objective is to determine under which conditions simplex agreement is solvable.

\subsection*{Contributions and Plan} 
We first recall notions of distributed computing and topology in Sections~\ref{section:distributed} and Section~\ref{section:complexes}, respectively. 
In Section~\ref{section:solvability}, we focus on discrete spaces: graphs and simplicial complexes. Our main result is a levelwise refinement of Conjecture~\ref{conj:simplex} involving a finer topological property: $n$-connectivity (Definition~\ref{def:n-connectivity}). Intuitively, an $n$-connected space has no holes of dimension at most $n$.
%SR: removed to complie
\begin{mainthm}
%\begin{theorem} 
Simplex agreement on $\mathcal C$ is solvable for $n + 1$ processes in the wait-free asynchronous read-write model
if and only if $\mathcal C$ is $(n-1)$-connected.
\end{mainthm}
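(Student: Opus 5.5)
The plan is to go through the Asynchronous Computability Theorem (Herlihy--Shavit), in its continuous form (Herlihy--Kozlov--Rajsbaum). For $n+1$ processes, simplex agreement on $\mathcal C$ is solvable iff there is a continuous map $f\colon |\Delta(\mathcal I)| \to |\mathcal C|$ carried by the task specification $\Delta$. Here $\mathcal I$ is the input complex: simplices are the colored sets $\{(i,x_i)\}$ with $x_i\in V(\mathcal C)$ arbitrary. The carrier map $\Delta$ sends an input simplex to the full output complex $\mathcal C$ if its values do not span a simplex of $\mathcal C$, and otherwise to the subcomplex of $\mathcal C$ spanned by the face $\sigma$ they span. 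Outputs must additionally be colored, i.e.\ we work with the colored output complex whose simplices are colored sets of vertices spanning a simplex of $\mathcal C$. Colorlessness helps here: the task is colorless, so I would use the colorless version of the theorem. Colorless simplex agreement is solvable for $n+1$ processes iff there is a continuous map $f\colon |\operatorname{skel}^n \mathcal I'| \to |\mathcal C|$ carried by $\Delta'$. Here $\mathcal I'$ is the full simplex on $V(\mathcal C)$ and $\Delta'(\tau)=\mathcal C$ if $\tau\notin\mathcal C$, and $\Delta'(\tau)=\tau$ if $\tau\in\mathcal C$.

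\textbf{Sufficiency.} Assume $\mathcal C$ is $(n-1)$-connected. I build $f$ skeleton by skeleton on $\operatorname{skel}^n\mathcal I'$, ordering simplices by dimension.
\begin{itemize}
\item On vertices, $f$ is the identity.
\item For a $k$-simplex $\tau\in\mathcal C$, define $f$ on $|\tau|$ as the identity inclusion. This is consistent, since every face of $\tau$ also lies in $\mathcal C$ and was already mapped identically.
\item For a $k$-simplex $\tau\notin\mathcal C$ with $k\le n$, $f$ is already defined on $|\partial\tau|\cong S^{k-1}$, landing in $|\mathcal C|$. Because $k-1\le n-1$ and $\mathcal C$ is $(n-1)$-connected, this sphere map extends to the disk $|\tau|$. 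The carrier condition $\Delta'(\tau)=\mathcal C$ allows the image to be anywhere.
\end{itemize}
The resulting map is carried by $\Delta'$. The colorless ACT (equivalently, simplicial approximation plus iterated barycentric agreement) then yields a protocol. Concretely, run enough rounds of barycentric agreement on the inputs' simplex in $\mathcal I'$, then apply a simplicial approximation of $f$ on $\operatorname{Bary}^N$ that respects carriers.

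\textbf{Necessity.} Suppose a protocol exists for $n+1$ processes. Then there is a carried continuous map $f\colon|\operatorname{skel}^n\mathcal I'|\to|\mathcal C|$. I need to show that every map $g\colon S^{k}\to|\mathcal C|$ with $k\le n-1$ is null-homotopic. By simplicial approximation, assume $g$ is simplicial from some subdivision of $S^k$ into $\mathcal C$; its image then lies in $|\mathcal C|\subseteq|\operatorname{skel}^{k+1}\mathcal I'|$ on the vertex set.

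The main step is to compare $f|_{|\mathcal C|}$ with the identity. Since $f$ is the identity on the simplices of $\mathcal C$ (carrier condition with $\Delta'(\tau)=\tau$ forces $f(|\tau|)\subseteq|\tau|$), the straight-line homotopy within each closed simplex shows $f|_{|\mathcal C|}\simeq \mathrm{id}$ carried by simplices. Hence $g\simeq f\circ g$. But $g$ factors through $|\operatorname{skel}^{k+1}\mathcal I'|\subseteq|\operatorname{skel}^n\mathcal I'|$, because $k+1\le n$. Moreover, the $(k+1)$-skeleton of the full simplex is $k$-connected, so $g$ is null-homotopic in it, and hence so is $f\circ g$ in $|\mathcal C|$. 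Therefore $\pi_k(\mathcal C)=0$ for all $k\le n-1$, which gives $(n-1)$-connectivity. For $k=0$ this is path-connectedness, which needs nonemptiness and $n\ge1$. The degenerate case $n=0$, with one process, is trivially solvable, and "$(-1)$-connected" means nonempty.

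\textbf{Main obstacle.} I expect the main obstacle to be justifying the reduction to the colorless ACT carefully. Specifically:
\begin{itemize}
\item One must check that simplex agreement really is colorless, i.e.\ that outputs may be adopted from other processes. This holds because the agreement and validity conditions depend only on the sets of input and output values.
\item The colorless ACT must be applied to an infinite-in-principle input complex; this is fine, since it is finite here.
\item In the necessity direction, the carrier condition on simplices of $\mathcal C$ must force $f$ to be homotopic to the identity.
\end{itemize}
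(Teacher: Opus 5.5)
Your proposal is correct and follows essentially the paper's proof. Sufficiency goes through the colorless computability theorem and a skeleton-by-skeleton extension of the inclusion using $(n-1)$-connectivity. Necessity combines the straight-line homotopy between the carried map and the inclusion with cellular (here simplicial) approximation and the connectivity of the skeleta of the full simplex. The only difference is that you apply this homotopy directly to the sphere map, whereas the paper first turns $f$ into a strict extension of the inclusion using the homotopy extension property (Lemma~\ref{lemma:carried}); your way is a harmless shortcut.

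Two slips need tidying. First, $f$ is only defined on $|\operatorname{skel}^n\mathcal I'|$, so the homotopy should be $f|_{|\operatorname{skel}^k\mathcal C|}\simeq\mathrm{id}$ rather than on all of $|\mathcal C|$. Second, what you need is that the simplicial approximation of $g$ lands in $|\operatorname{skel}^k\mathcal C|$. The inclusion $|\mathcal C|\subseteq|\operatorname{skel}^{k+1}\mathcal I'|$ that you wrote is false in general.
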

In particular, a space $X$ is $n$-connected for all $n \in \mathbb N$ if and only if it is contractible.
We then examine combinatorial criteria for determining the solvability of simplex agreement (Section~\ref{section:combinatorial}), and discuss the special case of clique agreement, for which Theorem~\ref{th:main} subsumes several existing combinatorial criteria, and covers strictly more examples (Section~\ref{section:application}).

%unlike our explicit protocol for approximate agreement on CUB spaces, the proof 
% Finally, we provide an example graph on which the existing combinatorial criteria do not apply, but Theorem~\ref{th:main} does. 

However, Theorem~\ref{th:main} relies on the Existence Theorem~\ref{th:carrier}, and thus does not provide an explicit protocol. To get explicit protocols, we turn our attention in Section~\ref{section:explicit} to continuous spaces. Our goal is to exploit the geometric nature of simplicial complexes to obtain explicit protocols solving simplex agreement. % Since geometric realizations of simplicial complexes are not   
To do so, we consider CUB spaces (Section~\ref{section:CUB}), a very general form of metric space with convexity properties, and introduce approximate agreement on CUB spaces, which subsumes classical $\epsilon$-agreement and $m$-dimensional approximate agreement. Our main results are that $\epsilon$-agreement on a CUB space is solvable for any number of processes by an explicit protocol, and that \emph{collapsible} simplicial complexes are CUB (Section~\ref{section:CUB_from}), yielding explicit protocols for solving approximate agreement on this restricted class of complexes (Section~\ref{section:recovering}). In particular, collapsibility implies contractibility, but the reverse is false.

\ifextended
\else
An extended version of this paper, including all the proofs that were omitted in this version, is available as a preprint~\cite{albert2026solving}.
\fi

\subsection*{Related work}

Approximate agreement belongs to a broader family of distributed agreement problems, rooted in the consensus task, in which processes must agree on a single value. The impossibility of consensus in asynchronous systems with crash failures~\cite{fischer1985impossibility} led to a number of weaker agreement tasks, among them approximate agreement. Other relaxations of consensus include $k$-set agreement~\cite{chaudhuri1993more}, in which the processes collectively decide at most $k$ distinct values; loop agreement~\cite{herlihy1997decidability}; and, more generally, rendezvous tasks~\cite{XingwuLiu2026,LiuXP09}. The $k$-set agreement task is an instance of simplex agreement, on the simplicial complex whose vertices are the possible values, and whose simplices are the sets of at most $k$ different values. Rendezvous tasks, on the other hand, are independent of simplex agreement, but their solvability also depends on connectivity criteria.

Other approximate agreement tasks on graphs have been proposed. Under shortest path validity~\cite{alistarh2023wait} (resp. minimal path validity~\cite{NowakR19}), outputs must lie on some shortest (resp. chordless) path between two inputs. Further criteria for the solvability or impossibility of approximate agreement on graphs have also been proposed. However, these criteria are mostly combinatorial and, a priori, not topological: clique approximate agreement is solvable for all nicely bridged or radius-one graphs under shortest path validity~\cite{alistarh2023wait}, the existence of an \emph{AER labeling}~\cite{alistarh2023wait} on a graph prevents solvability for $n\geq 3$ processes, and clique agreement is impossible on graphs satisfying the \emph{clique containment condition}~\cite{Liu22}. Conjecture~\ref{conj:simplex} and our Theorem~\ref{th:main} hint that solvability depends less on traditional graph features than on the underlying higher-dimensional topological structure. The results of Section~\ref{section:application} confirm this intuition by showing that these combinatorial criteria imply topological properties.

Another general discrete approximate agreement task has been proposed by Nowak and Rybicki~\cite{NowakR19}. It relies on abstract convexity spaces~\cite{kay1971axiomatic}, which provide a notion of convexity for \emph{finite} structures. This framework yields solvability results for several discrete structures. On graphs, it applies to chordal graphs, a subclass of nicely bridged graphs.

Our main characterization, Theorem~\ref{th:main}, shows that simplex agreement is undecidable even for three processes, since it is equivalent to deciding simple connectivity. It is known that general task solvability is undecidable in the wait-free case for three processes~\cite{GafniK99} (as is loop agreement~\cite{HerlihyR03}), and in essentially any non-trivial model of computation~\cite{herlihy1997decidability}.

Finally, after our preprint~\cite{albert2026solving} was posted online on 22 June 2026, Rybicki and Verbitsky~\cite{rybicki2026solvability,verbitsky2026topological} independently posted a similar proof of Theorem~\ref{th:main} on 23 June 2026. The same proof strategy generalizes to $\mathcal C$-convex simplex agreement tasks, where convexity is understood in the sense of convexity spaces~\cite{kay1971axiomatic}. This allows them to address a larger set of questions on graphs, for instance the problems of monophonic and geodesic agreement instead of just clique agreement. However, they do not address the problem of \emph{continuous} approximate agreement.

\section{Distributed computing models and solvability}\label{section:distributed}

The tasks we introduce are stated independently from any distributed computing model. We study their solvability in the standard~\cite{attiya2004distributed,herlihy2013distributed} crash-prone wait-free asynchronous read-write shared memory model: processes may crash, communicate by atomically reading from and writing to shared registers, without any timing assumptions, and each process must complete its task in a finite number of steps regardless of the speed or failure of any other process.

We describe our protocol for approximate agreement on CUB spaces in the equivalent~\cite{herlihy2013distributed} crash-prone wait-free layered immediate snapshot model: 
processes progress in synchronous rounds.
At each round, the process with id $i$ writes its current view into the $i$-th cell of a fresh memory array, then takes an atomic snapshot of the entire memory. Because writes happen concurrently, a process's snapshot only includes the writes that occurred before its own write.

Finally, we prove Theorem~\ref{th:main} in the \emph{colorless} crash-prone wait-free layered immediate snapshot model: in this variation of the previous models, process ids are ignored; processes exchange their view but do not track who sent what. 
This model is equivalent to the two previous models for \emph{colorless} tasks only~\cite{HerlihyRRS2017}, which is the case for simplex agreement.

\section{Abstract and geometric simplicial complexes}\label{section:complexes}

In this section, we introduce the combinatorial and topological notions used in this document, starting with abstract simplicial complexes. We simply call them simplicial complexes, as opposed to geometric simplicial complexes.

\begin{definition}
A \textbf{simplicial complex} is a set $\mathcal{C}$ of non-empty sets called \textbf{simplices}, such that if $\sigma \in \mathcal{C}$, then any subset $\emptyset \neq \tau \subseteq \sigma$, called a \textbf{face} of $\sigma$, is also in $\mathcal{C}$. The \textbf{dimension} of $\sigma \in \mathcal{C}$ is $|\sigma|-1$. $0$ and $1$-dimensional simplices are called \textbf{vertices} and \textbf{edges}, respectively. $\mathcal{C}$ is \textbf{finite-dimensional} if there is a uniform bound on the dimensions of its simplices.  A \textbf{simplicial subcomplex} of $\mathcal{C}$ is a subset of $\mathcal{C}$ that is also a simplicial complex.
\end{definition}

Simplicial complexes are partially ordered by inclusion.
In the following, we consider the following simplicial subcomplexes of a simplicial complex $\mathcal C$: for all $n \in \mathbb N$, the \emph{$n$-skeleton} $\operatorname{skel}^n \mathcal C$ is $\mathcal C$, with simplices of dimension greater than $n$ removed.
Simplicial complexes have geometric counterparts:

\begin{definition}
    A \textbf{geometric simplicial complex} $\mathcal K$ in $\mathbb{R}^N$, $N \in \mathbb N$, is a simplicial complex whose simplices are sets of linearly independent points in $\mathbb{R}^N$.  
    The convex hull of a simplex is called a \textbf{geometric simplex} of $\mathcal K$.
    We regard $\mathcal K$ as the subspace of $\mathbb R^N$ defined by the union of its geometric simplices.
    We call \textbf{interior} of a simplex $\sigma$ the subspace $\sigma$ with its faces removed. A \textbf{subdivision} of $\mathcal K$ is a geometric simplicial complex $\mathcal L$ that is homeomorphic to $\mathcal K$, denoted $\mathcal K \simeq \mathcal L$, and such that every simplex of $\mathcal L$ is included in some simplex of $\mathcal K$.
\end{definition}

%Any geometric simplicial complex determines a simplicial complex. In the other direction, 
Any \emph{finite} simplicial complex $\mathcal C$ can be turned into a geometric simplicial complex $|\mathcal C|$, its \emph{geometric realization}~\cite{munkres2025elements}, whose underlying simplicial complex is $\mathcal C$. Given a simplex $\sigma$ of $\mathcal C$, we denote by $|\sigma|$ the corresponding geometric simplex in $|\mathcal C|$. 
We can now lift topological notions from simplicial complexes to their geometric realizations:

%\todo{SR: contractible is one of the most important notion of the paper, "homotopoty equivalent" "weakly" undefined. At least sa:Intuitively, a contractible space is one that can be continuously shrunk to a point within that space. Fig 1 should be referenced; adding a little arrow or explanation of how the full triangle is deformed into its two edges. AA: added full definition. Hide weak contractibility in a proof, not needed otherwise.}
\begin{definition}
    A topological space $X$ is \textbf{contractible} if it is homotopy equivalent to a point, i.e., if there is a point $* \in X$ and a continuous map $H: X\times [0,1] \to X$ such that $H(x, 0) = x$ and $H(x, 1) = *$. 
\end{definition}

Intuitively, a contractible space is one that has no holes in a strong sense: it can be continuously shrunk to a point within that space, see Figure~\ref{fig:complex}. 

\begin{definition}\label{def:n-connectivity}
    A topological space $X$ is \textbf{$n$-connected} for $n \in \mathbb N$ if it is non-empty, path connected, and its first $n$ homotopy groups $\pi_k(X)$, $1 \leq k \leq n$, are trivial, i.e., any continuous map $S^k \to  X$ from the $k$-sphere extends to a continuous map $D^{k+1} \to X$ from the $(k+1)$-disk. We also say that $X$ is \textbf{$(-1)$-connected} if it is non-empty. 
\end{definition}

Intuitively, a $(n+1)$-connected spaces has no holes of dimensions less than $n$. For geometric simplicial complexes, connectedness is related to contractibility:

\begin{proposition}
$|\mathcal C|$ is $n$-connected for all $n \in \mathbb N$ if and only if it is contractible.
\end{proposition}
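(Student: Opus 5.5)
The proof has two directions. The easy one uses only the definitions: a contractible space has trivial homotopy groups. The converse combines Whitehead's theorem with the fact that geometric realizations of finite complexes are CW complexes.

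\emph{Contractible $\Rightarrow$ $n$-connected for all $n$.} Let $H$ be a contraction of $X=|\mathcal C|$ to $*$. Then $X$ is non-empty and path connected, since $t\mapsto H(x,t)$ joins any $x$ to $*$. For a continuous map $f:S^k\to X$, we extend $f$ over $D^{k+1}$ by coning it off through the homotopy. Write points of $D^{k+1}$ as $r\cdot u$ with $u\in S^k$ and $r\in[0,1]$, and set
$F(r u)=H(f(u),1-r)$.
At $r=1$ this gives $F(u)=H(f(u),0)=f(u)$, so $F$ extends $f$. At $r=0$ every $u$ is sent to $H(f(u),1)=*$, so $F$ is well defined at the origin. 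Continuity of $F$ follows because $D^{k+1}$ is the quotient of $S^k\times[0,1]$ obtained by collapsing $S^k\times\{0\}$, and the map $(u,r)\mapsto H(f(u),1-r)$ is constant on that collapsed set. Hence $\pi_k(X)$ is trivial for every $k\ge 1$.

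\emph{$n$-connected for all $n$ $\Rightarrow$ contractible.} Consider the constant map $c:|\mathcal C|\to\{*\}$. Since all homotopy groups of $|\mathcal C|$ vanish and $|\mathcal C|$ is path connected, $c$ induces isomorphisms on all $\pi_k$ for every choice of basepoint, so $c$ is a weak homotopy equivalence. The geometric realization of a finite simplicial complex is a finite CW complex: the cells are the interiors of the geometric simplices, attached along their boundaries. A point is also a CW complex. Whitehead's theorem states that a weak homotopy equivalence between CW complexes is a homotopy equivalence. It therefore gives a homotopy inverse $s:\{*\}\to|\mathcal C|$ and a homotopy $H$ from $\mathrm{id}_{|\mathcal C|}$ to $s\circ c$, the constant map at the point $s(*)$. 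This $H$ is exactly a contraction in the sense of the definition above.

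The main step is invoking Whitehead's theorem, which I would cite (e.g.\ from Hatcher) rather than reprove; the only thing left to justify is the CW structure on $|\mathcal C|$. If a self-contained argument were preferred, I would build the contraction by induction over skeleta, which only uses finite-dimensionality. Let $d$ be the dimension of $\mathcal C$ and view the identity of $|\mathcal C|$ as a map $|\mathcal C|\to|\mathcal C|$. Send all vertices to a single point $*$ of $|\mathcal C|$. Then, for each $k$ from $1$ to $d$, suppose the homotopy from the identity to the constant map has already been constructed on $|\operatorname{skel}^{k-1}\mathcal C|\times[0,1]$. For each $k$-simplex $\sigma$, the homotopy is then prescribed on $\partial(|\sigma|\times[0,1])$: by the identity on $|\sigma|\times\{0\}$, by the constant map at $*$ on $|\sigma|\times\{1\}$, and by the inductive step on $\partial|\sigma|\times[0,1]$. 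This boundary is a $k$-sphere, so $\pi_k(|\mathcal C|)=0$ lets us extend the map over $|\sigma|\times[0,1]$, which is a $(k+1)$-disk. Gluing these extensions across all $k$-simplices completes the inductive step.

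The only care needed in this direct version is that the extensions agree on shared faces and glue continuously. This holds because the extension over each $|\sigma|\times[0,1]$ restricts to the already-fixed map on $\partial|\sigma|\times[0,1]$, and a map on a finite complex that is continuous on each closed simplex is continuous.
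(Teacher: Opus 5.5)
Your proof is correct and follows the paper's route. The paper notes that being $n$-connected for all $n$ amounts to weak contractibility and then applies Whitehead's theorem to $|\mathcal C|$, which is your main argument; you additionally spell out the cone-off construction for the easy direction and the CW structure. Your optional skeleton-by-skeleton construction of the contraction is also valid: it is essentially Whitehead's argument specialised to this case, and it shows in passing that only $\pi_k$ for $k\le\dim\mathcal C$ is needed.
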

\begin{proof}
A space $X$ is $n$-connected for all $n \in \mathbb N$ if and only if it is weakly contractible. By the Whitehead theorem~\cite{whitehead1949combinatorial}, $|\mathcal C|$ is weakly contractible if and only if it is contractible. \qed
\end{proof}

We say that $\mathcal C$ is \emph{contractible} (resp. $n$-\emph{connected} for $n \in \mathbb N$) if its geometric realization $|\mathcal C|$ is. By definition, $\mathcal C$ is $n$-connected for all $n \in \mathbb N$ if and only if it is contractible. Contractibility has a combinatorial counterpart:

\begin{definition}
    Let $\mathcal C$ be a \emph{finite} simplicial complex. A \textbf{free face} $(\tau, \sigma)$ of $\mathcal C$ is a pair of simplices of $\mathcal C$ satisfying: $(1)$ $\sigma$ is maximal (for inclusion) in $\mathcal C$, $(2)$ $\tau$ is included but not equal to $\sigma$, and $(3)$  $\tau$ is not included in any other simplex of $\mathcal C$.
% \begin{itemize}[noitemsep]
%     \item $\sigma$ is maximal (for inclusion) in $\mathcal C$, 
%     \item $\tau$ is included but not equal to $\sigma$,
%     \item $\tau$ is not included in another simplex of $\mathcal C$.
% \end{itemize}
A \textbf{collapse} of $\mathcal C$ is the removal of all simplices $\tau \subseteq \gamma \subseteq \sigma$, where $(\tau, \sigma)$ is a free face. $\mathcal C$ is \textbf{collapsible} if there is a sequence of collapse leading to a point.
\end{definition}

A collapsible simplicial complex is contractible, but the converse is false, even for finite complexes, see for example the dunce hat %~\cite{zeeman1963dunce}
or the room with two holes\footnote{See, for example~\url{https://en.wikipedia.org/wiki/Collapse_(topology)}.}.

\begin{figure}[t]
    \centering
        \begin{tikzpicture}[scale=.41]

\fill[blue!12, draw=black, thick] (90:2) -- (210:2) -- (-30:2) -- cycle;

\fill[blue!20, draw=black, rounded corners=3mm] (90:1.5) -- (210:1.5) -- (-30:1.5) -- cycle;
\fill[blue!30, draw=black, rounded corners=3mm] (90:1) -- (210:1) -- (-30:1) -- cycle;
\fill[blue!40, draw=black]  (0,0) circle (0.25);
\filldraw[black] (0,0) circle (1pt);

%\draw[gray, thick, rounded corners=2mm] (0,0) -- (1,0) -- (1,1.732) -- cycle;

\end{tikzpicture}
        ~ ~ ~ ~ ~
        \begin{tikzpicture}[scale=.65]

% First: filled equilateral triangle with vertex dots
\fill[blue!12, draw=black, thick] (0,0) -- (2,0) -- (1,1.732) -- cycle;
\filldraw[black] (0,0) circle (2pt);
\filldraw[black] (2,0) circle (2pt);
\filldraw[black] (1,1.732) circle (2pt);

% Arrow between 1st and 2nd
\draw[->, thick] (2.4,0.866) -- (2.7,0.866);

% Second: same triangle without right side and no fill, with vertex dots
\begin{scope}[xshift=3cm]
\draw[thick] (0,0) -- (2,0);
\draw[thick] (0,0) -- (1,1.732);
% Right side omitted
\filldraw[black] (0,0) circle (2pt);
\filldraw[black] (2,0) circle (2pt);
\filldraw[black] (1,1.732) circle (2pt);
\end{scope}

% Arrow between 2nd and 3rd
\draw[->, thick] (5.4,0.866) -- (5.7,0.866);

% Third: just two connected vertices
\begin{scope}[xshift=6cm]
\draw[thick] (0,0) -- (1,1.732);
\filldraw[black] (0,0) circle (2pt);
\filldraw[black] (1,1.732) circle (2pt);
\end{scope}

% Arrow between 3rd and 4th
\draw[->, thick] (7.6,0.866) -- (7.9,0.866);

% Fourth: single vertex dot
\begin{scope}[xshift=9cm]
\filldraw[black] (0,0) circle (2pt);
\end{scope}

\end{tikzpicture}
    \caption{Contractibility and collapsibility of the standard $2$-simplex. Left: a homotopy contracting the triangle to its midpoint. Right: a sequence of elementary collapses reducing it to a single vertex.}
\end{figure}

Aside from simplicial complexes, we will also encounter (geometric) \textbf{cubical complexes}, defined analogously to simplicial complexes, but using cubes instead of simplices. They consist of points, edges, filled squares, and more generally filled cubes in $\mathbb R^N$, glued along their faces.

Finally, geometric simplicial and cubical complexes are topological spaces, but also pseudometric spaces.
Each geometric simplex of a geometric simplicial complex $\mathcal K$ is convex in $\mathbb R^N$, hence can be equipped with the \textbf{flat equilateral} metric: the Euclidean metric where each edge has length $1$. This induces the \textbf{piecewise flat equilateral} pseudometric on $|\mathcal C|$: the length of a path in $|\mathcal C|$ is defined as the sum of the lengths of its intersections with the individual simplices, and the distance between two points $x,y$ is the infimum over all paths from $x$ to $y$ of their length, and $\infty$ if the two points are not connected by a path.
If $\mathcal K$ is path-connected, this pseudometric is a metric. 
Similarly, geometric cubical complexes are also equipped with the piecewise flat equilateral pseudometric, where each edge of a cube has length $1$.

A \textbf{piecewise-linear} homeomorphism $\mathcal K \to \mathcal L$, or \textbf{PL}-homeomorphism, between geometric simplicial or cubical complexes $\mathcal K$ and $\mathcal L$, is the data of two simplicial subdivisions $\mathcal K'$ and $\mathcal L'$ of $\mathcal K$ and $\mathcal L$, respectively, along with an homeomorphism $\mathcal K' \simeq \mathcal L'$, such that each geometric simplex of $\mathcal K'$ is mapped linearly to a geometric simplex of $\mathcal L'$ with respect to the piecewise flat equilateral metrics on $\mathcal K$ and $\mathcal L$. See Figure~\ref{fig:CUB} for an example of PL-homeomorphism between a simplicial and a cubical complex.

\section{The solvability of simplex agreement}\label{section:solvability}

Here we prove our main result for discrete spaces, a full characterization theorem of solvability of simplex agreement, in Section~\ref{sec:mainDisc}.
Then, in Section~\ref{section:combinatorial},
 we examine different criteria for determining when  a given  simplicial complex $\mathcal C$ satisfies the criteria of the theorem.
 Finally in Section~\ref{section:application} we show how the theorem implies results about solvability of clique agreement.

\subsection{The characterization}
\label{sec:mainDisc}
Conjecture~\ref{conj:simplex} concerns the solvability of simplex agreement for all numbers of processes. However, existing results already indicate that solvability depends on the number of processes: specifically, clique agreement under shortest path validity is always solvable for two processes~\cite{alistarh2023wait}. In this section, we restate simplex agreement within the framework of combinatorial topology~\cite{herlihy2013distributed} and apply Theorem~\ref{th:carrier} below to prove our main Theorem~\ref{th:main}. This theorem provides a finer characterization of the solvability of simplex agreement based on the number of processes, thereby generalizing previous observations to arbitrary dimensions.

\begin{definition}[{\cite{herlihy2013distributed}}]
    A \textbf{colorless task} is a triple $(\mathcal I, \mathcal O, \phi)$, where:
    \begin{itemize}
        \item $\mathcal I$ and $\mathcal O$ are simplicial complexes: the \textbf{input} and \textbf{output} complexes;
        \item $\phi : \mathcal I \to \subc{\mathcal O}$ is a \textbf{carrier map}, i.e., a function from simplices of $\mathcal I$ to simplicial subcomplexes of $\mathcal O$, monotonic in the sense that for simplices $\kappa \subset \sigma$ in $\mathcal I$, $\phi(\kappa)$ is a simplicial subcomplex of $\phi(\sigma)$.
    \end{itemize}
\end{definition}

Let us now fix a finite non-empty simplicial complex $\mathcal C$. The colorless task associated to simplex agreement on $\mathcal C$ is the triple $(\Delta, \mathcal C, \phi)$ with:
\begin{itemize}
    \item $\Delta = \{\sigma \subseteq \mathcal{C}_0 \}$ the \emph{full} simplicial complex on the vertices $\mathcal{C}_0$ of $\mathcal C$.
    \item $\phi$ is the carrier map defined on subsets of vertices $\sigma$ of $\mathcal C$ by $\phi(\sigma) = \sigma$ regarded as a simplicial subcomplex of $\mathcal C$ if $\sigma \in \mathcal C$, and $\phi(\sigma) = \mathcal{ C}$ otherwise.
\end{itemize}

This colorless task is a special case of \textbf{convergence tasks}~\cite{herlihy2013distributed}, i.e. triples $(\mathcal I, \mathcal C, \phi)$ where $\mathcal C$ is any simplicial complex, $\mathcal I$ a simplicial complex over the vertices of $\mathcal C$, and $\phi(\sigma) = \sigma$ when $\sigma \in \mathcal C$, or $\mathcal C$ otherwise. From this perspective, simplex agreement is the \emph{canonical} {convergence task}. 
The solvability of colorless tasks in the wait-free asynchronous read-write model is characterized by the existence of certain continuous \emph{carried} maps~\cite{HerlihyRRS2017}:
a continuous map $f : |\mathcal K| \to |\mathcal L|$ is \textbf{carried by} a carrier map $\phi$ if for all geometric simplices $\sigma \in |\mathcal K|$, we have $f(\sigma) \subseteq |\phi(\sigma)|$.

\begin{theorem}[\cite{HerlihyRRS2017}]\label{th:carrier}
A colorless task \( (\mathcal{I}, \mathcal{O}, \phi) \) has a wait-free \( (n + 1) \)-process layered protocol if and only if there is a continuous map
$f : |\operatorname{skel}^n \mathcal{I}| \to |\mathcal{O}|$ carried by \( \phi \). 
\end{theorem}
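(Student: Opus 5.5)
The plan is to prove the two directions separately. Both go through the standard correspondence between $r$-round colorless layered immediate snapshot protocols and iterated barycentric subdivisions. The fact I would use is this: if $n+1$ processes start on the vertices of a simplex $\sigma\in\mathcal I$, then after one colorless immediate snapshot round the possible colorless views form exactly the simplices of $\operatorname{Bary}(\operatorname{skel}^n\sigma)$. The restriction to the $n$-skeleton appears because at most $n+1$ distinct inputs can be present, so at most $n+1$ distinct views can be present. After $r$ rounds the protocol complex is $\operatorname{Bary}^r(\operatorname{skel}^n\mathcal I)$, and every view is carried by the simplex of actual inputs. I would take this as the known combinatorial description of the colorless model from Section~\ref{section:distributed}.

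($\Rightarrow$) Suppose an $r$-round layered protocol solves the task. Its decision function assigns an output vertex to each final view, so it is a vertex map $\delta$ from $\operatorname{Bary}^r(\operatorname{skel}^n\mathcal I)$ to $\mathcal O$. Correctness of the protocol gives the following: every simplex of views arising from inputs in $\sigma$ is sent onto a simplex of $\phi(\sigma)$. Hence $\delta$ is simplicial and carried by $\phi$ composed with the carrier of the subdivision. Extend $\delta$ linearly on each geometric simplex to get $f$. Using $|\operatorname{Bary}^r\sigma|=|\sigma|$, each $|\sigma|$ is a union of subdivided simplices, each mapped into $|\phi(\sigma)|$. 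Therefore $f(|\sigma|)\subseteq|\phi(\sigma)|$, and $f$ is the required continuous carried map.

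($\Leftarrow$) Let $f:|\operatorname{skel}^n\mathcal I|\to|\mathcal O|$ be continuous and carried by $\phi$. The open stars $\operatorname{st}(v)$ of the vertices $v$ of $\mathcal O$ form an open cover of $|\mathcal O|$, so the sets $f^{-1}(\operatorname{st} v)$ form an open cover of the domain. The domain is compact, since I take $\mathcal I$ finite, as in simplex agreement. By the Lebesgue number lemma there is an $N$ such that every open star of a vertex $x$ of $\operatorname{Bary}^N(\operatorname{skel}^n\mathcal I)$ satisfies $f(\operatorname{st} x)\subseteq\operatorname{st}(g(x))$ for some vertex $g(x)$ of $\mathcal O$; fix such a choice of $g(x)$. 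The standard argument then shows that $g$ is a simplicial approximation of $f$. If $x_0,\dots,x_k$ span a simplex, the intersection of their stars is non-empty, so the intersection of the stars of the $g(x_i)$ is non-empty, so the $g(x_i)$ span a simplex of $\mathcal O$.

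The protocol is then: run $N$ rounds of colorless barycentric agreement, which lands each process on a vertex $x$ of $\operatorname{Bary}^N(\operatorname{skel}^n\sigma)$, and output $g(x)$.

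The step I expect to be the main obstacle is checking that $g$ respects the carrier map $\phi$, not just that it is simplicial, because the subcomplexes $\phi(\sigma)$ need not be "star-closed". The key observation is the following. Let $x$ be a vertex lying in $|\sigma|$. Then $f(x)\in\operatorname{st}(g(x))$, so $g(x)$ is a vertex of the carrier simplex of the point $f(x)$ in $\mathcal O$. Since $f(x)\in|\phi(\sigma)|$ and $\phi(\sigma)$ is a subcomplex, that carrier simplex lies in $\phi(\sigma)$, and hence $g(x)\in\phi(\sigma)$. The same argument applies to a whole simplex of views carried by $\sigma$: it is mapped to a simplex spanned by vertices of the carrier simplices of $f$-values in $|\phi(\sigma)|$. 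To conclude that this simplex lies in $\phi(\sigma)$, I would pick a point $p$ in the common intersection of the stars. The carrier of $f(p)$ then contains all the $g(x_i)$, and that carrier lies in $\phi(\sigma)$, which closes the argument.
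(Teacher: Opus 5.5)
The paper gives no proof of this result; it is imported from \cite{HerlihyRRS2017}. Your proposal is the standard argument from that source and is correct: the $N$-layer colorless protocol complex is $\operatorname{Bary}^N(\operatorname{skel}^n\mathcal I)$, so ($\Rightarrow$) is the geometric realization of the decision map, and ($\Leftarrow$) uses a simplicial approximation of $f$ carried by $\phi$, applied after $N$ rounds of barycentric agreement. One step needs tightening: $p$ must be chosen in $|\sigma|$, e.g.\ in the relative interior of the simplex $\{x_0,\dots,x_k\}$ of $\operatorname{Bary}^N(\sigma)$, since an arbitrary point of $\bigcap_i\operatorname{st}(x_i)$ may lie in a coface $\sigma'\supsetneq\sigma$, where you only know $f(p)\in|\phi(\sigma')|$.
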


 In our situation, the application of Theorem~\ref{th:carrier} is straightforward, and the carrying condition means that
for $n \in \mathbb N$ and all simplices $\sigma$ in $\operatorname{skel}^n\Delta$, if $\sigma \in \mathcal C$ then $f(|\sigma|) \subseteq |\sigma|$, otherwise $f(|\sigma|) \subseteq |\mathcal C|$. We deduce the following lemma\ifextended\else, proved in the extended version~\cite{albert2026solving}\fi.

\begin{lemma}\label{lemma:carried}
Fix $n \in \mathbb N$.
There is a continuous map $f: |\operatorname{skel}^n\Delta| \to |\mathcal C|$ carried by $\phi$ if and only if the inclusion $i : |\operatorname{skel}^n \mathcal C| \hookrightarrow |\mathcal C|$ can be extended to a continuous map $|\operatorname{skel}^n\Delta| \to |\mathcal C|$.
\end{lemma}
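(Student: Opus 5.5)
We prove the two directions separately, after noting that $\operatorname{skel}^n \mathcal C$ is a subcomplex of $\operatorname{skel}^n \Delta$. Indeed, every simplex of $\mathcal C$ is a subset of $\mathcal C_0$, so it is a simplex of $\Delta$, and the dimension bound is preserved. Hence $|\operatorname{skel}^n\mathcal C|$ sits inside $|\operatorname{skel}^n\Delta|$ as a union of geometric simplices. The key observation is that, for a simplex $\sigma$ of $\mathcal C$, the condition $f(|\sigma|)\subseteq|\sigma|$ pins $f$ down only on $|\sigma|$ itself. Simplices of $\Delta$ not in $\mathcal C$ carry only the vacuous constraint $f(|\sigma|)\subseteq|\mathcal C|$.

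($\Leftarrow$) Let $g: |\operatorname{skel}^n\Delta| \to |\mathcal C|$ be a continuous extension of $i$. We check that $g$ is carried by $\phi$. Take a simplex $\sigma$ of $\operatorname{skel}^n\Delta$. If $\sigma\notin\mathcal C$, then $\phi(\sigma)=\mathcal C$ and the condition $g(|\sigma|)\subseteq|\mathcal C|$ holds trivially. If $\sigma\in\mathcal C$, then $\sigma\in\operatorname{skel}^n\mathcal C$ because $\dim\sigma\le n$. So $g$ restricted to $|\sigma|$ is the inclusion, and $g(|\sigma|)=|\sigma|=|\phi(\sigma)|$.

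($\Rightarrow$) Let $f$ be carried by $\phi$. Then $f$ maps each simplex $|\sigma|$ of $|\operatorname{skel}^n\mathcal C|$ into itself, but it need not be the identity there, so we cannot take $f$ directly as the extension. We will instead build a homotopy from $f|_{|\operatorname{skel}^n\mathcal C|}$ to $i$ and use it to modify $f$.

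Restricted to $A=|\operatorname{skel}^n\mathcal C|$, the maps $f|_A$ and $i$ are both carried by the carrier map $\sigma\mapsto|\sigma|$. Since each $|\sigma|$ is convex, the straight-line homotopy $H(x,t)=(1-t)f(x)+t\,x$ makes sense. For $x$ in the interior of the smallest simplex $|\sigma|$ containing it, both $f(x)$ and $x$ lie in $|\sigma|$, so $H(x,t)$ is computed inside a single geometric simplex. It is therefore well defined, and it is continuous on each closed simplex, hence on $A$. This gives a homotopy $H: A\times[0,1]\to|\mathcal C|$ with $H_0=f|_A$ and $H_1=i$.

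We then apply the homotopy extension property of the CW/simplicial pair $(|\operatorname{skel}^n\Delta|, A)$. The map $f$ on $|\operatorname{skel}^n\Delta|\times\{0\}$ and the homotopy $H$ on $A\times[0,1]$ together extend to a homotopy $G$ on $|\operatorname{skel}^n\Delta|\times[0,1]$. The map $G_1$ is then a continuous map $|\operatorname{skel}^n\Delta|\to|\mathcal C|$ that restricts to $i$ on $A$, which is the required extension.

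The main obstacle is this last step. It needs the homotopy extension property for a subcomplex of a finite simplicial complex, which is a standard fact (there is a retraction of $X\times I$ onto $X\times\{0\}\cup A\times I$). An alternative that avoids citing it is to work on a collar: define the extension as $H$ (reparametrized) on a neighbourhood of $A$ and as $f$ outside it. The first option is cleaner, so we cite the homotopy extension property.
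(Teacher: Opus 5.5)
Your proof is correct and follows essentially the same route as the paper. For the forward direction, both use the straight-line homotopy $(1-t)f(x)+tx$ inside each convex simplex $|\sigma|$, pasted into a homotopy from $f|_{|\operatorname{skel}^n\mathcal C|}$ to $i$, and then the homotopy extension property of the pair $(|\operatorname{skel}^n\Delta|, |\operatorname{skel}^n\mathcal C|)$. For the converse, both observe that any extension of $i$ is automatically carried by $\phi$.
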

\ifextended
\begin{proof}
For the forward direction, let $f : |\operatorname{skel}^n\Delta| \to |\mathcal C|$ be a continuous map carried by $\phi$. Let us build an extension of $i$. For every simplices $\sigma \in \operatorname{skel}^n \mathcal C$, both $f(x)$ and $x$ lie in the convex subspace $|\sigma|$, so we can consider $H_\sigma: |\sigma| \times [0,1] \to |\sigma|$ the homotopy defined by $H_\sigma(x,t) = (1-t)f(x) +tx$. For all simplices $\sigma, \kappa \in  \operatorname{skel}^n \mathcal C$, $H_\sigma$ and $H_\kappa$ agree on $|\sigma| \times [0,1]$ and $|\kappa| \times [0,1]$. By the pasting lemma, they assemble into an homotopy $H: |\operatorname{skel}^n \mathcal C| \times [0,1] \to |C|$ from $f|_{|\operatorname{skel}^n \mathcal C|}$ to $i$. Since the pair $(|\operatorname{skel}^n \Delta|, |\operatorname{skel}^n \mathcal C|)$ has the homotopy extension property by~\cite[Proposition 0.16]{hatcher}, and because $f$ and $H$ agree on ${\operatorname{skel}^n \mathcal C} \times \{0\}$, there is an homotopy $H': |{\operatorname{skel}^n \Delta}| \times [0,1] \to |\mathcal C|$ from $f$ to an extension of $i$.

For the reverse direction, if an extension $f$ exists, then it is automatically carried: for all simplices $\sigma \in \operatorname{skel}^n\Delta$, if $\sigma \in \operatorname{skel}^n \mathcal C$ then $f(|\sigma|) = |\sigma|$, otherwise $f(|\sigma|) \subseteq |\mathcal C|$. \qed
\end{proof}
\fi

\begin{theorem}\label{th:main}
For all $n \in \mathbb N$, there is an $(n+1)$-process layered protocol for simplex agreement on $\mathcal C$ if and only if $\mathcal C$ is $(n-1)$-connected.
\end{theorem}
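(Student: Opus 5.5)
By Theorem~\ref{th:carrier} and Lemma~\ref{lemma:carried}, it suffices to show the following: the inclusion $i : |\operatorname{skel}^n \mathcal C| \hookrightarrow |\mathcal C|$ extends to a continuous map $|\operatorname{skel}^n\Delta| \to |\mathcal C|$ if and only if $\mathcal C$ is $(n-1)$-connected. Both directions are standard obstruction-theoretic arguments. The key facts are that $|\Delta|$ is a simplex, hence contractible, and that $|\operatorname{skel}^n\Delta|$ is therefore $(n-1)$-connected. We also use that the pair $(|\mathcal C|, |\operatorname{skel}^n \mathcal C|)$ is $n$-connected by cellular approximation, so $\pi_k(|\operatorname{skel}^n\mathcal C|) \to \pi_k(|\mathcal C|)$ is surjective for $k \le n-1$ (indeed bijective for $k\le n-1$).

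\textbf{($\Leftarrow$)} Suppose $\mathcal C$ is $(n-1)$-connected. We extend $i$ skeleton by skeleton over the simplices of $\operatorname{skel}^n\Delta$ that are not in $\mathcal C$, by induction on dimension $k = 0,\dots,n$.

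For $k=0$, all vertices of $\Delta$ already lie in $\mathcal C$. For a $k$-simplex $\sigma \in \Delta\setminus\mathcal C$ with $1 \le k \le n$, the map is already defined on $|\partial\sigma| \cong S^{k-1}$. Since $k-1 \le n-1$ and $|\mathcal C|$ is $(n-1)$-connected, this map extends to $|\sigma| \cong D^k$; for $k=1$ we use path-connectedness. Simplices of $\mathcal C$ are already handled by $i$. The pasting lemma on the finite CW structure then gives a continuous extension.

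\textbf{($\Rightarrow$)} Suppose an extension $F : |\operatorname{skel}^n\Delta| \to |\mathcal C|$ of $i$ exists. We must show $\pi_k(|\mathcal C|)=0$ for $0\le k\le n-1$, together with non-emptiness, which is given.

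Take any map $g : S^k \to |\mathcal C|$ with $k \le n-1$. By cellular (simplicial) approximation, $g$ is homotopic to a map landing in $|\operatorname{skel}^k \mathcal C| \subseteq |\operatorname{skel}^n\mathcal C|$. Composing with the inclusion $j : |\operatorname{skel}^n\mathcal C| \hookrightarrow |\operatorname{skel}^n\Delta|$, we use that $|\operatorname{skel}^n \Delta|$ is $(n-1)$-connected: it is the $n$-skeleton of a simplex, and a $k$-sphere with $k\le n-1$ can be pushed into the $n$-skeleton and filled there, by cellular approximation of a null-homotopy in $|\Delta|$ of dimension $k+1 \le n$. Hence $j\circ g$ extends to $G : D^{k+1} \to |\operatorname{skel}^n\Delta|$. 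Then $F \circ G$ is a null-homotopy of $F\circ j \circ g = i \circ g$, which is homotopic to the original sphere. So $\pi_k(|\mathcal C|)=0$ for all $k\le n-1$. The case $n=0$ is trivial: the condition is non-emptiness and one process simply outputs its input.

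\textbf{Main obstacle.} The point needing the most care is the $(n-1)$-connectivity of $|\operatorname{skel}^n\Delta|$, together with the dimension bookkeeping for cellular approximation: the filling disk has dimension $k+1\le n$, so it can be deformed into the $n$-skeleton rel boundary. I would cite cellular approximation for CW pairs (Hatcher) and apply it to the null-homotopy in the contractible $|\Delta|$. Everything else reduces to the pasting lemma and the index conventions for $(-1)$- and $0$-connectivity.
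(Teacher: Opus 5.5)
Your proposal is correct and follows essentially the same route as the paper. You reduce via Theorem~\ref{th:carrier} and Lemma~\ref{lemma:carried} to the extension problem, build the extension simplex by simplex using $\pi_{k-1}(|\mathcal C|)=0$, and for the converse push a sphere into $|\operatorname{skel}^k\mathcal C|$ by cellular approximation, fill it in the $(n-1)$-connected $|\operatorname{skel}^n\Delta|$, and compose with the extension; the only difference is that you justify the $(n-1)$-connectivity of $|\operatorname{skel}^n\Delta|$ in slightly more detail than the paper does.
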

\ifextended
\begin{proof}
By Theorem~\ref{th:carrier} and Lemma~\ref{lemma:carried}, it suffices to show that $ \mathcal C$ is $(n-1)$-connected if and only if the inclusion $i : |\operatorname{skel}^n \mathcal C| \hookrightarrow |\mathcal C|$ extends to a continuous map $|\operatorname{skel}^n \Delta| \to |\mathcal C|$. 

For the forward direction, suppose $\mathcal C$ is $(n-1)$-connected. Let us build a continuous extension $g_k: |\operatorname{skel}^k\Delta| \to |C|$ of $i_k : |\operatorname{skel}^k \mathcal C| \hookrightarrow |\mathcal C|$ for all $0 \leq k \leq n$ by induction on $k$:
    \begin{itemize}
        \item For $k = 0$, $\operatorname{skel}^0\Delta = \operatorname{skel}^0 \mathcal C$ so we define $g_0 = i_0$.
        \item Suppose $g_k$ has been defined for $0 \leq k < n$. For all $(k+1)$-simplices $\sigma$ in $\operatorname{skel}^{k+1}\Delta$, either $\sigma \in \operatorname{skel}^{k+1} \mathcal C$, and we set $g_\sigma : |\sigma| \to |\mathcal C|$ to be the inclusion. In the other case, the boundary $|\partial \sigma|$ is a $k$-sphere in $\operatorname{skel}^{k}\Delta$, on which $g_k$ is defined. But since $|\mathcal C|$ is $(n-1)$-connected, $\pi_k(|\mathcal C|) = 0$, so $g_k|_{|\partial \sigma|}: |\partial \sigma| \to |\mathcal C|$ is null homotopic, and we can extend $g_k|_{|\partial \sigma|}$ to a continuous map $g_\sigma: {| \sigma|} \to |\mathcal C|$. Again by the pasting lemma, this defines a continuous map $g_{k+1}: |\operatorname{skel}^{k+1}\Delta| \to |C|$.
    \end{itemize}

    For the reverse direction, given an extension $g: |\operatorname{skel}^n\Delta| \to |\mathcal C|$ of the inclusion $i : |\operatorname{skel}^n \mathcal C| \hookrightarrow | \mathcal C|$, let us show for all $0 \leq k \leq n-1$ that all the continuous maps $S^k \to |\mathcal C|$ are null‑homotopic, i.e., that $\pi_k(|\mathcal C|)=0$.
    Let $s: S^k \to |\mathcal C|$ be such a map. By the cellular approximation theorem, $s$ is homotopic to a cellular map $s': S^k \to |\mathcal C|$, whose image lies in $|\operatorname{skel}_k \mathcal C|$. Without loss of generality, we can work with $s'$ instead.
    Consider the map $s'': S^k \xrightarrow{s'} |\operatorname{skel}^kC| \hookrightarrow |\operatorname{skel}^n\Delta|$. Since $g$ is an extension, we have $g \circ s'' = s' : S^k \to |\mathcal C|$.
    In addition, since $\Delta$ is contractible, $|\operatorname{skel}^n\Delta|$ is $n-1$-connected, again by the cellular approximation theorem. Hence $s''$ is null-homotopic, and so is $ g \circ s'' = s'$.  \qed
\end{proof}
\else
\begin{proof}
See~\cite{albert2026solving} for the detailed proof. 
By Theorem~\ref{th:carrier} and Lemma~\ref{lemma:carried}, it suffices to show that $ \mathcal C$ is $(n-1)$-connected if and only if the inclusion $i : |\operatorname{skel}^n \mathcal C| \hookrightarrow |\mathcal C|$ extends to a continuous map $|\operatorname{skel}^n \Delta| \to |\mathcal C|$. 
For the forward direction, we suppose that $\mathcal C$ is $(n-1)$-connected, and we build continuous extensions $g_k: |\operatorname{skel}^k\Delta| \to |C|$ of $i_k : |\operatorname{skel}^k \mathcal C| \hookrightarrow |\mathcal C|$ for all $0 \leq k \leq n$ by induction on $k$.
For the reverse direction, given an extension $g: |\operatorname{skel}^n\Delta| \to |\mathcal C|$ of the inclusion $i : |\operatorname{skel}^n \mathcal C| \hookrightarrow | \mathcal C|$, we show for all $0 \leq k \leq n-1$ that the continuous maps $S^k \to |\mathcal C|$ are null‑homotopic. \qed
\end{proof}
\fi

Notice that in the case $n=0$, since we supposed that $\mathcal C$ was non-empty, it is by definition $(-1)$-connected, and a single process can indeed always solve simplex agreement on $\mathcal C$.
We easily recover Conjecture~\ref{conj:simplex}: simplex agreement on $\mathcal C$ is solvable for all numbers of agents if and only if $\mathcal C$ is $n$-connected for all $n \in \mathbb N$, if and only if $\mathcal C$ is contractible.

\subsection{Combinatorial criteria for connectivity}\label{section:combinatorial}

In this section, we examine different criteria for determining for which values of $n \in \mathbb N$ a given finite simplicial complex $\mathcal C$ is $n$-connected. First observe that it is trivial to decide whether $\mathcal C$ is non-empty and path-connected. However, $1$-connectivity, also called \textbf{simple-connectivity}, is already undecidable, as it reduces to the word problem~\cite{rabin1956recursive}. By the Hurewicz theorem, we see that this is the main obstacle to decidability: 
\begin{itemize}
    \item Suppose $\mathcal C$ is non-empty, path-connected, and simply connected. Then, for all $n > 1$, the $n$th homotopy group $\pi_n(|\mathcal C|)$ is isomorphic to the $n$th reduced singular homology group with integer coefficients $\tilde H_n(|\mathcal C|)$, which is  isomorphic to the simplicial homology group $\tilde H_n(\mathcal C)$, and the latter is efficiently computable~\cite{dumas2003computing}. In addition, $\tilde H_0(\mathcal C)$ is trivial. Therefore, $\mathcal C$ is $n$-connected for $n > 1$ if and only if its first $n+1$ reduced homology groups $\tilde H_k(\mathcal C)$, $0 \leq k \leq n$, are trivial. Since $\mathcal C$ is finite of maximal dimension $d$, it suffices to compute its first $d$ homology groups (the higher ones are trivial).
    \item If $\mathcal C$ is non-empty and path-connected (but not necessarily simply connected), then homology only gives an upper bound: if the $n$th reduced homology groups $\tilde H_n(\mathcal C)$ is not trivial, then $\mathcal C$ is not $n$-connected. 
\end{itemize}

Nevertheless, there are families of finite simplicial complexes for which simple-connectivity is decidable. For instance, \emph{collapsibility}, which is decidable on finite simplicial complexes, implies contractibility hence simple-connectivity.
More interestingly, \emph{shellability} is $\mathrm{NP}$-complete~\cite{goaoc2019shellability}, hence decidable, and implies useful topological properties:

%\todo{SR: take out the "pure" from this definition. Put it where it is used. Also take out the intuition.}
\begin{definition}[\cite{bjorner1996shellable,bruggesser1971shellable}]
 A finite simplicial complex is \textbf{shellable} if there is an ordering $\sigma_1, \sigma_2, \ldots $ of its maximal simplices such that for all $k \geq 2$, the simplicial subcomplex: $$ {\Big (}\bigcup _{i=1}^{k-1}\sigma_{i}{\Big )}\cap \sigma_{k}$$ is a $(\dim \sigma_k - 1)$-dimensional complex such that all simplices are faces of some $(\dim \sigma_k - 1)$-dimensional simplex.
 \end{definition}

Intuitively, a shellable complex can be obtained from a well-behaved gluing of its maximal faces.

\begin{theorem}[\cite{bjorner1996shellable}]
A shellable simplicial complex has the
homotopy type of a wedge of spheres, where for each
dimension $i$, the number of $i$-spheres is the number of $i$-dimensional maximal simplices $\sigma_k$ whose entire boundary is contained in the union of the earlier maximal simplices $\bigcup _{i=1}^{k-1}\sigma_{i}$.
\end{theorem}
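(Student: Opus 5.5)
This theorem is cited from Björner's survey, so the plan is to reconstruct the standard shelling argument. The proof is an induction along the shelling order. Let $\mathcal C_k = \bigcup_{i=1}^{k}\sigma_i$ denote the subcomplex generated by the first $k$ maximal simplices; the goal is to track the homotopy type of $|\mathcal C_k|$ as $k$ grows.

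For the base case, $|\mathcal C_1| = |\sigma_1|$ is a simplex, hence contractible. This is the empty wedge of spheres.

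For the inductive step, set $d = \dim\sigma_k$. By the shelling condition, $\mathcal C_{k-1}\cap\sigma_k$ is a pure $(d-1)$-dimensional subcomplex of $\partial\sigma_k$, i.e., a union of some facets of $\sigma_k$. We have $|\mathcal C_k| = |\mathcal C_{k-1}| \cup |\sigma_k|$, glued along $A = |\mathcal C_{k-1}\cap \sigma_k|$. There are two cases.
\begin{enumerate}
\item If $A$ is a proper union of facets of $\sigma_k$, I claim that $A$ is collapsible, hence contractible. It is a cone: choose a vertex $v$ of $\sigma_k$ whose opposite facet $\sigma_k\setminus\{v\}$ is not in $A$. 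Every facet in $A$ is of the form $\sigma_k\setminus\{w\}$ with $w\neq v$, so it contains $v$, and hence $A$ is a cone with apex $v$. Then $(|\sigma_k|, A)$ is a CW pair with both spaces contractible, so $|\mathcal C_k| \simeq |\mathcal C_{k-1}|$ by the standard gluing lemma (gluing a contractible space along a contractible subcomplex does not change homotopy type; Hatcher, Prop.~0.17). Equivalently, $\sigma_k$ together with its new faces can be removed by a sequence of elementary collapses.
\item If $A = |\partial\sigma_k|$, which is exactly the case of the statement, then $|\mathcal C_k|$ is obtained from $|\mathcal C_{k-1}|$ by attaching a $d$-cell along its boundary sphere. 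Since $|\mathcal C_{k-1}|$ is by induction homotopy equivalent to a wedge of spheres, I would argue that the attaching map is null-homotopic. Then attaching the cell adds a wedge summand $S^d$: $X\cup_f D^d \simeq X\vee S^d$ when $f\simeq *$.
\end{enumerate}

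The main obstacle is proving the null-homotopy in the second case. Note that $\dim\sigma_i$ need not be monotone, so one cannot argue purely by dimension in the wedge. The plan is to use the known fact that any shelling can be rearranged so that dimensions are weakly decreasing (Björner--Wachs, for nonpure shellings). I would cite this rather than reprove it, with the extra check that the rearrangement preserves, for each dimension, the count of simplices whose full boundary lies in the earlier union.

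With this ordering, the argument proceeds as follows.
\begin{enumerate}
\item First perform all case-(1) steps among the simplices of dimension $\geq d$ and all full-boundary steps in higher dimensions. These yield a wedge of spheres of dimension $\geq d$.
\item For the spheres of dimension $>d$, cellular approximation shows that $\pi_{d-1}$ of such a wedge is trivial, so the attaching map is null-homotopic.
\item For spheres of the same dimension $d$ the same conclusion holds. The only subtle case is $d=1$, where $S^0\to X$ is null-homotopic iff both points lie in one path component. This holds because the earlier union is connected: it is built from contractible pieces glued along nonempty intersections.
\end{enumerate}
Alternatively, and more directly, the boundary sphere $|\partial\sigma_k|$ lies in the union of simplices of dimension $\geq d$ built so far. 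By case (1), that union deformation retracts onto a subcomplex with $(d-1)$-skeleton contained in a contractible cone, which also gives the null-homotopy. Finally, the count of $i$-spheres equals the number of case-(2) steps in dimension $i$, as claimed.
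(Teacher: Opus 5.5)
The paper gives no proof of this statement; it quotes it from Bj\"orner. Your proposal reproduces the standard shelling proof, and it is correct.

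After rearranging so that dimensions weakly decrease, every maximal simplex preceding $\sigma_k$ has dimension $\geq d=\dim\sigma_k$. By induction, $|\mathcal C_{k-1}|$ is then homotopy equivalent to a wedge of spheres of dimension $\geq d$, hence $(d-1)$-connected, and path-connected when $d=1$. So the boundary inclusion of a full-boundary $\sigma_k$ is null-homotopic. For $d=0$ the step just adds a disjoint point, i.e.\ an $S^0$ summand.

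The extra check you flag does hold. The rearranged order has the same restriction faces $\mathcal R(\sigma_k)$, where $\mathcal R(\sigma_k)$ is the minimal face of $\sigma_k$ not in the earlier union. Moreover, $\sigma_k$ has its whole boundary in the earlier union exactly when $\mathcal R(\sigma_k)=\sigma_k$.

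Some smaller points:
\begin{itemize}
\item Your split into spheres of dimension $>d$ and $=d$ is unnecessary.
\item Your closing ``alternatively'' remark should be dropped, since the deformation retraction it asserts is not justified.
\item The count must exclude $\sigma_1$ (this matters when $\sigma_1$ is a vertex). Your base case handles this implicitly.
\end{itemize}

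If you prefer not to cite the rearrangement lemma, there is a shorter route. Let $\mathcal B$ be the subcomplex generated by $\sigma_1$ and by those $\sigma_k$ whose boundary is \emph{not} entirely contained in the earlier union. Take a face $G$ of some $\sigma_k$ with $G\in\mathcal C_{k-1}$, and let $\sigma_j$ be the first maximal simplex containing $G$. Then $\sigma_j$ cannot be a full-boundary simplex: $G\notin\mathcal C_{j-1}$ would force $G=\sigma_j\subseteq\sigma_k$, contradicting maximality. This observation has two consequences:
\begin{itemize}
\item Every proper face of a full-boundary simplex lies in $\mathcal B$.
\item Each generator of $\mathcal B$ meets the earlier generators exactly in $\sigma_k\cap\mathcal C_{k-1}$, which is a cone by your case (1). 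Hence $|\mathcal B|$ is contractible.
\end{itemize}
Then $|\mathcal C|\simeq|\mathcal C|/|\mathcal B|$ by Hatcher's Proposition~0.17. This quotient is visibly a wedge of one $S^{\dim\sigma_k}$ per full-boundary $\sigma_k$, so there is no null-homotopy to verify.
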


A wedge of spheres is $n$-connected exactly when every sphere in the wedge has dimension strictly greater than $n$, hence the connectivity of shellable complexes is decidable. This provides examples of simplicial complexes on which simplex agreement is solvable up to any fixed number $n \in \mathbb N$ of processes: any finite triangulation $\mathcal C$ of the $(n+1)$-sphere $S^{n+1}$ is $n$-connected, but not $n+1$-connected, hence simplex agreement on $\mathcal C$ is solvable for $k$ processes if and only if $k \leq n+2$. 

\subsection{Application: clique approximate agreement}\label{section:application}

The connectivity criterion of Theorem~\ref{th:main} recovers prior results for clique agreement under clique validity. Concretely, previously proposed combinatorial criteria translate into topological ones. Some of these observations were already made in~\cite{Ledent21}:
\begin{itemize}
    \item the clique complex of a $k$-cycle is the contractible standard $2$-simplex for $k=3$, but is not simply-connected for $k > 3$;
    \item the clique complex of a graph whose clique graph is a tree is contractible;
    \item the clique complex of any connected graph is also path connected;
    \item the clique complex of a radius-one graph is a cone, hence contractible;
    \item the clique complex of a chordal graph is contractible (see e.g.~\cite{adiprasito2016higher});
     \item the clique complexes of weakly bridged graphs are exactly the weakly systolic complexes, which are contractible, see~\cite{brevsar2013bucolic,osajda2013combinatorial}.
     \ifextended\else
    \item the existence of an \emph{AER labeling} of~\cite{alistarh2023wait} on a graph $G$ prevents simple-connectivity of $\operatorname{Cl}(G)$, see the extended version~\cite{albert2026solving};
    \item the $k$-clique containment condition on a graph $G$ from~\cite{Liu22} prevents $(k-1)$-connectivity of $\operatorname{Cl}(G)$, see~\cite{albert2026solving}. However, the relation between the connectivity of the clique complex $\operatorname{Cl}(G)$ and the chromatic number of subgraphs $A$ of $G$ is less clear.
    \fi
\end{itemize}

\ifextended
In addition, we show that the existence of an AER labeling~\cite{alistarh2023wait} on a graph $G$ prevents its clique complex $\operatorname{Cl}(G)$ from being simply connected, which recovers in particular that the existence of an AER labeling is an impossibility criterion for clique agreement.

\begin{definition}[\cite{alistarh2023wait}]
    An \textbf{AER labeling} of a graph $G$ is a labeling of the vertices of \(G\) by \(\{0,1,2\}\), such that:
    \begin{itemize}
        \item \(G\) contains no triangle with three different labels;
        \item \(G\) contains a cycle \(C\) with exactly one vertex labeled \(1\) and its two neighbors in \(C\) labeled \(0\) and \(2\).
    \end{itemize}
\end{definition}

\begin{theorem}
If a graph $G$ admits an AER labeling, then its clique complex $\operatorname{Cl}(G)$ is not simply-connected.
\end{theorem}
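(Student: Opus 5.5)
We plan to build, from the labeling, a continuous map $\ell : |\operatorname{Cl}(G)| \to S^1$, and to show that the cycle $C$ is sent to a loop of nonzero degree. If $\operatorname{Cl}(G)$ were simply connected, every loop in it would be null-homotopic, so its image under $\ell$ would be null-homotopic in $S^1$ and have degree $0$. That contradiction is the whole argument.

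\textbf{The target.} Let $T$ be the boundary of a triangle with vertices $0,1,2$; it is a circle. The labeling $c : V(G) \to \{0,1,2\}$ is a map of vertex sets. Any two adjacent vertices carry labels that are either equal or form an edge of $T$, since every pair of distinct labels is an edge of $T$. Now take a simplex of $\operatorname{Cl}(G)$, that is, a clique. Its labels cannot be all three of $0,1,2$: if they were, three vertices of the clique with distinct labels would form a triangle with three different labels, which the first condition forbids. So the labels of any clique form a simplex of $T$, and $c$ is a simplicial map $\operatorname{Cl}(G) \to T$. Its geometric realization $\ell = |c|$ is continuous.

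\textbf{The degree of the cycle.} Write $C = v_0 v_1 \cdots v_{m-1} v_0$. It gives an edge loop in $|\operatorname{Cl}(G)|$, hence a map $\gamma : S^1 \to |\operatorname{Cl}(G)|$. The loop $\ell \circ \gamma$ walks along $T$, edge by edge or by staying put. We compute its degree by counting signed passes through the interior of the edge $\{0,2\}$ of $T$. Each step $v_j \to v_{j+1}$ with labels $0 \to 2$ counts $+1$, and each step $2 \to 0$ counts $-1$. This signed count is well defined for edge-path loops, for instance via the simplicial cochain dual to that edge, which is a generator of $H^1(T)\cong\mathbb Z$.

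\textbf{The parity argument, which is the main obstacle.} We have to show this count is nonzero; the plan is to prove it is odd. Remove the unique vertex $v$ labeled $1$ from $C$. Its neighbours on $C$ are labeled $0$ and $2$. What remains is a path from the $2$-labeled neighbour to the $0$-labeled neighbour, every vertex of which has label $0$ or $2$. Along this path the label switches between $0$ and $2$ an odd number of times, since it starts at $2$ and ends at $0$. The two steps through $v$ are $0\to1$ and $1\to 2$, which do not cross the edge $\{0,2\}$. So the total number of crossings of $\{0,2\}$ is odd, and the signed sum has the same parity, so it is odd and in particular nonzero. Hence $\ell\circ\gamma$ is not null-homotopic.

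\textbf{Conclusion.} The remaining point to check carefully is the identification of the degree with this signed count. It is standard: the generator of $H^1(T;\mathbb Z)$ is represented by the cochain dual to one edge, and it evaluates to the degree on an edge loop. With that, if $\gamma$ were null-homotopic in $|\operatorname{Cl}(G)|$, then $\ell\circ\gamma$ would be null-homotopic in $T \cong S^1$ and have degree $0$, which contradicts the parity argument. Therefore $\pi_1(|\operatorname{Cl}(G)|)\neq 0$, and $\operatorname{Cl}(G)$ is not simply connected.
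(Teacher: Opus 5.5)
Your proposal is correct and follows essentially the same route as the paper. Both arguments use the labeling to get a simplicial map $\operatorname{Cl}(G)\to\partial\Delta_2\cong S^1$, and both show that the AER cycle maps to an essential loop because its $0/2$-labelled part crosses the edge $\{0,2\}$ an odd number of times. Your signed-count (cochain) computation of the degree is a slightly more careful rendering of the paper's claim that the image loop is homotopic to $1\to0\to2\to1$; in fact your signed sum telescopes to exactly $\pm1$, which matches that claim.
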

\begin{proof}
Consider the standard $2$-simplex $\Delta_2$ with vertices $\{0,1,2\}$.
Define a simplicial map \(f\colon \operatorname{Cl}(G)\to \Delta_2\) by sending each vertex to its label.  
The first condition of the AER labeling implies the image lies in the \(1\)-skeleton \(\partial\Delta_2\cong S^1\), giving a continuous map
\[
f: |\operatorname{Cl}(G)|\to S^1 .
\]
The cycle \(C\) from the second condition maps to a closed edge‑path in \(S^1\): from \(1\) to \(0\), then from \(0\) to \(2\) using only vertices labeled \(0,2\), and finally back to \(1\).  
Every edge between \(0\) and \(2\) uses the unique edge \(02\); a path from \(0\) to \(2\) using only \(0,2\) must traverse that edge an odd number of times. Hence \(f(C)\) is homotopic to the loop \(1\to0\to2\to1\), which winds once around \(S^1\).  
Consequently \(f_*\colon \pi_1(|\operatorname{Cl}(G)|)\to \pi_1(S^1)\) is non‑trivial.  
Therefore \(\pi_1(|\operatorname{Cl}(G)|)\) is non‑trivial, and \(\operatorname{Cl}(G)\) is not simply‑connected. \qed
\end{proof}

Similarly, we show that the clique containment condition, introduced in~\cite{Liu22} as an impossibility criterion for the clique agreement task, implies a fine connectivity condition, which implies an upper bound on the maximal number of processes for which clique agreement is solvable. The relation between this bound and the chromatic number bound given in~\cite{Liu22} is however not clear.  

\begin{definition}[\cite{Liu22}]
Fix $k \geq 2$.
A graph $G$ satisfies the $k$-clique containment conditions if there is a subgraph $A$ of $G$ such that:
\begin{enumerate}
    \item every $(k - 1)$-clique in $A$ is contained in exactly two $k$-cliques in $A$;\label{cond1}
    \item there is a $k$-clique in $A$ that is not contained in any $(k+1)$-clique in $G$.\label{cond2}
\end{enumerate}

\end{definition}

\begin{theorem}
If a graph $G$ satisfies the $k$-clique containment, then its clique complex $\operatorname{Cl}(G)$ is not $(k-1)$-connected.
\end{theorem}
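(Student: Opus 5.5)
Let $A$ be the subgraph witnessing the $k$-clique containment condition. The plan is to show that $\operatorname{Cl}(G)$ has a nonzero homology class in degree $k-1$, which rules out $(k-1)$-connectivity.

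\textbf{Step 1: a mod-2 cycle.} The $k$-cliques of $A$ are $(k-1)$-simplices of $\operatorname{Cl}(G)$. Let $z = \sum \sigma$ be the chain with $\mathbb Z/2$ coefficients, summed over all $k$-cliques $\sigma$ of $A$. Every $(k-2)$-face of such a $\sigma$ is a $(k-1)$-clique of $A$. By condition~\ref{cond1}, each such face lies in exactly two $k$-cliques of $A$. Hence $\partial z = 0$ mod $2$, so $z$ is a $(k-1)$-cycle. It is nonzero, since condition~\ref{cond2} provides at least one $k$-clique $\tau$ in $A$.

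\textbf{Step 2: $z$ is not a boundary.} A boundary $\partial c$, with $c$ a $k$-chain of $\operatorname{Cl}(G)$, is a sum of faces of $k$-simplices, i.e.\ of $(k+1)$-cliques of $G$. By condition~\ref{cond2}, $\tau$ is contained in no $(k+1)$-clique of $G$. So $\tau$ is a maximal simplex and never appears in $\partial c$, whereas $\tau$ appears in $z$ with coefficient $1$. Therefore $[z]\neq 0$ in $\tilde H_{k-1}(\operatorname{Cl}(G);\mathbb Z/2)$.

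\textbf{Step 3: from homology to connectivity.} Assume for contradiction that $|\operatorname{Cl}(G)|$ is $(k-1)$-connected. The Hurewicz theorem (together with the trivial case $k-1=0$) gives $\tilde H_j(\,\cdot\,;\mathbb Z)=0$ for all $j\le k-1$. By the universal coefficient theorem, $\tilde H_{k-1}(\,\cdot\,;\mathbb Z/2)\cong \tilde H_{k-1}\otimes\mathbb Z/2 \oplus \operatorname{Tor}(\tilde H_{k-2},\mathbb Z/2)$, and both terms vanish. This contradicts Step 2. Simplicial homology agrees with singular homology of $|\operatorname{Cl}(G)|$, as already used in Section~\ref{section:combinatorial}.

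The main obstacle is this last step: Section~\ref{section:combinatorial} only uses integer homology, so the mod-2 reduction requires care. Working mod 2 is forced because $A$ need not be orientable, so no integer cycle is available. The boundary case $k=2$ also needs a check: there $z$ is an Eulerian edge set in which every vertex has degree $2$, the statement reduces to non-simple-connectivity, and the Hurewicz argument for $H_1$ still applies.
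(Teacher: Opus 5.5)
Your proposal is correct and follows the paper's proof step for step. The paper also takes the mod-$2$ sum of the $k$-cliques of $A$ as a $(k-1)$-cycle via condition~1, uses the maximal simplex from condition~2 to show it is not a boundary, and invokes the Hurewicz and universal coefficient theorems to contradict $(k-1)$-connectivity; your explicit UCT computation and your check of the case $k=2$ only spell out details the paper leaves implicit.
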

\begin{proof}
In the following, we consider the homology of $\operatorname{Cl}(G)$ with coefficients in  $\mathbb{Z}/2\mathbb{Z}$.
Let $S$ be the sum of all $(k-1)$-simplices that correspond to $k$-cliques of the subgraph $A$.
By the first condition, because we work mod~$2$, adding a face twice yields zero, hence $\partial S = 0$ and $S$ is a $(k-1)$-cycle in $\operatorname{Cl}(G)$.
Now, the second condition guarantees the existence of a $(k-1)$-simplex $\sigma$ that is not a face of any $k$-simplex of $\operatorname{Cl}(G)$.
Therefore, if $S$ were the boundary of some $k$-chain $S'$, then the coefficient of $\sigma$ in $\partial S'$ would be $0$. But $\sigma$ appears with coefficient $1$ in $S$, hence $S$ represents a non‑trivial class in $H_{k-1}\bigl(\operatorname{Cl}(G); \mathbb{Z}/2\mathbb{Z}\bigr)$.
But if $\operatorname{Cl}(G)$ was $(k-1)$-connected, then we would have by the universal coefficient theorem and the Hurewicz theorem, $H_{k-1}\bigl(\operatorname{Cl}(G); \mathbb{Z}/2\mathbb{Z}\bigr) = 0$. \qed
\end{proof}
\fi

We can also provide graphs for which simplex agreement is solvable up to a given number $n \in \mathbb N$ of processes: consider the barycentric subdivision $\operatorname{bary}(\mathcal C)$ of any finite triangulation $\mathcal C$ of $S^{n+1}$. Since $|\mathcal C| \simeq |\operatorname{bary}(\mathcal C)|$, $\operatorname{bary}(\mathcal C)$ has the same topological properties as $\mathcal C$. In addition, $\operatorname{bary}(\mathcal C)$ is \textbf{flag}, i.e., is the clique complex of its $1$-skeleton. Therefore, by Theorem~\ref{th:main}, clique agreement on $G$ is solvable for $k$ processes if and only if $k \leq n+1$.% (remark that in this situation, the clique containment condition~\cite{Liu22} applies). 

More interestingly, we can construct graphs for which Theorem~\ref{th:main} applies, but no existing combinatorial criterion does. Consider the simplicial complex $\mathcal C$ in Figure~\ref{fig:complex}. $\mathcal C$ is again flag, i.e., is the clique complex of its $1$-skeleton. Its geometric realization is homotopy equivalent to the $2$-sphere $S^2$. Hence, $\mathcal C$ is $1$-connected but not $2$-connected. Theorem~\ref{th:main} applies, but since $\mathcal C$ is $1$-connected, the only existing combinatorial criterion that could apply is the clique containment condition~\cite{Liu22}, which does not, \ifextended as shown below\else see the extended version~\cite{albert2026solving}\fi. 

\begin{figure}[t]
\centering
    \begin{tikzpicture}[
    scale=.5,
    rotate around x={3},
    rotate around y={3},
    rotate around z={-1.3},
    % rotate around x={3},
    % rotate around y={3},
    % rotate around z={-1.3},
    % Define variable for side length
    declare function={L=1.4;} % Side length of the cube
]

% Define the coordinates of the cube vertices
\coordinate (A) at (0,0,0);       % front-bottom-left
\coordinate (B) at (L,0,0);       % front-bottom-right
\coordinate (C) at (L,L,0);       % front-top-right
\coordinate (D) at (0,L,0);       % front-top-left
\coordinate (E) at (0,0,L);       % back-bottom-left
\coordinate (F) at (L,0,L);       % back-bottom-right
\coordinate (G) at (L,L,L);       % back-top-right
\coordinate (H) at (0,L,L);       % back-top-left

% Draw edges
\draw[thick] (E) -- (F);
\draw[thick] (F) -- (G);
\draw[thick] (G) -- (H);
\draw[thick] (H) -- (E);

\draw[dashed, thick] (E) -- (A);
\draw[thick] (F) -- (B);
\draw[thick] (G) -- (C);
\draw[thick] (H) -- (D);

\draw[gray, dashed, thick] (A) -- (B);
\draw[gray, thick] (B) -- (C);
\draw[gray, thick] (C) -- (D);
\draw[gray, dashed, thick] (D) -- (A);

\draw[thick] (D) -- (G);
\draw[dashed, thin] (E) -- (B);
\draw[dashed, thin] (E) -- (D);
\draw[thick] (E) -- (G);
\draw[thick] (B) -- (G);
\draw[dashed, thin] (D) -- (B);

% Add vertices (2pt circles at each point)
\filldraw[gray] (A) circle (2pt);
\filldraw[gray] (B) circle (2pt);
\filldraw[gray] (C) circle (2pt);
\filldraw[gray] (D) circle (2pt);
\filldraw[black] (E) circle (2pt);
\filldraw[black] (F) circle (2pt);
\filldraw[black] (G) circle (2pt);
\filldraw[black] (H) circle (2pt);

\fill[blue!50, opacity=.45] (H) -- (G) -- (F) -- (E);

\fill[blue!50, opacity=.3] (H) -- (G) -- (C) -- (D);

\fill[blue!50, opacity=.3] (G) -- (C) -- (B) -- (F);

\fill[blue!50, opacity=.3] (H) -- (D) -- (C) -- (B) -- (F) -- (G) -- (H);

% \node[below left] at (A) {$A$};
% \node[below right] at (B) {$B$};
% \node[above right] at (C) {$C$};
% \node[above left] at (D) {$D$};
% \node[below left] at (E) {$E$};
% \node[below right] at (F) {$F$};
% \node[above right] at (G) {$G$};
% \node[above left] at (H) {$H$};

\end{tikzpicture} \begin{tikzpicture}[
    scale=.5,
    rotate around x={3},
    rotate around y={3},
    rotate around z={-1.3},
    % rotate around x={3},
    % rotate around y={3},
    % rotate around z={-1.3},
    % Define variable for side length
    declare function={L=1.4;} % Side length of the cube
]

% Define the coordinates of the cube vertices
\coordinate (A) at (0,0,0);       % front-bottom-left
\coordinate (B) at (L,0,0);       % front-bottom-right
\coordinate (C) at (L,L,0);       % front-top-right
\coordinate (D) at (0,L,0);       % front-top-left
\coordinate (E) at (0,0,L);       % back-bottom-left
\coordinate (F) at (L,0,L);       % back-bottom-right
\coordinate (G) at (L,L,L);       % back-top-right
\coordinate (H) at (0,L,L);       % back-top-left

% Draw edges
\draw[thick] (E) -- (F);
\draw[thick] (F) -- (G);
\draw[thick] (G) -- (H);
\draw[thick] (H) -- (E);

\draw[dashed, thick] (E) -- (A);
\draw[thick] (F) -- (B);
\draw[thick] (G) -- (C);
\draw[thick] (H) -- (D);

\draw[gray, dashed, thick] (A) -- (B);
\draw[gray, thick] (B) -- (C);
\draw[gray, thick] (C) -- (D);
\draw[gray, dashed, thick] (D) -- (A);

\draw[thick] (H) -- (C);
\draw[dashed, thin] (A) -- (F);
\draw[dashed, thin] (H) -- (A);
\draw[thick] (H) -- (F);
\draw[thick] (C) -- (F);
\draw[dashed, thin] (A) -- (C);

% Add vertices (2pt circles at each point)
\filldraw[gray] (A) circle (2pt);
\filldraw[gray] (B) circle (2pt);
\filldraw[gray] (C) circle (2pt);
\filldraw[gray] (D) circle (2pt);
\filldraw[black] (E) circle (2pt);
\filldraw[black] (F) circle (2pt);
\filldraw[black] (G) circle (2pt);
\filldraw[black] (H) circle (2pt);

\fill[red!50, opacity=.45] (H) -- (G) -- (F) -- (E);

\fill[red!50, opacity=.3] (H) -- (G) -- (C) -- (D);

\fill[red!50, opacity=.3] (G) -- (C) -- (B) -- (F);

\fill[red!50, opacity=.3] (H) -- (D) -- (C) -- (B) -- (F) -- (G) -- (H);

% \node[below left] at (A) {$A$};
% \node[below right] at (B) {$B$};
% \node[above right] at (C) {$C$};
% \node[above left] at (D) {$D$};
% \node[below left] at (E) {$E$};
% \node[below right] at (F) {$F$};
% \node[above right] at (G) {$G$};
% \node[above left] at (H) {$H$};

\end{tikzpicture} 
    \ \\
    \begin{tikzpicture}[
    scale=.45,
    rotate around x={3},
    rotate around y={-10},
    rotate around z={-1.3},
    declare function={L=1;} % Side length of the cube
]

% Define a macro for drawing one cube at a given position (x,y)
\newcommand{\redcube}[3]{% #1 = x offset, #2 = y offset, #3 = z offset (z is usually 0)
    \begin{scope}[shift={(#1*L, #2*L, #3*L)}]
        % Define coordinates relative to this cube's origin
        \coordinate (A) at (0,0,0);
        \coordinate (B) at (L,0,0);
        \coordinate (C) at (L,L,0);
        \coordinate (D) at (0,L,0);
        \coordinate (E) at (0,0,L);
        \coordinate (F) at (L,0,L);
        \coordinate (G) at (L,L,L);
        \coordinate (H) at (0,L,L);
        
        % Fill and draw the cube
        \fill[red!23] (H) -- (D) -- (C) -- (B) -- (F) -- (E) -- (H);
        \draw (E) -- (F) -- (G) -- (H) -- (E);
        \draw (F) -- (B) -- (C) -- (G);
        \draw (H) -- (D) -- (C);
        \draw (H) -- (F);
        \draw (C) -- (F);
        \draw (H) -- (C);
    \end{scope}
}

\newcommand{\bluecube}[3]{% #1 = x offset, #2 = y offset, #3 = z offset (z is usually 0)
    \begin{scope}[shift={(#1*L, #2*L, #3*L)}]
        % Define coordinates relative to this cube's origin
        \coordinate (A) at (0,0,0);
        \coordinate (B) at (L,0,0);
        \coordinate (C) at (L,L,0);
        \coordinate (D) at (0,L,0);
        \coordinate (E) at (0,0,L);
        \coordinate (F) at (L,0,L);
        \coordinate (G) at (L,L,L);
        \coordinate (H) at (0,L,L);
        
        % Fill and draw the cube
        \fill[blue!23] (H) -- (D) -- (C) -- (B) -- (F) -- (E) -- (H);
        \draw (E) -- (F) -- (G) -- (H) -- (E);
        \draw (F) -- (B) -- (C) -- (G);
        \draw (H) -- (D) -- (C);
        \draw (G) -- (E);
        \draw (G) -- (B);
        \draw (G) -- (D);
    \end{scope}
}

\draw[dashed] (-3,1,0) -- (0,1,0);

% ========= X = 0 =========
% _ r b _
% r b r b
% b r b r
% _ b r _
%\redcube  {0}{1}{0}
%\bluecube {0}{2}{0}

\%redcube  {0}{0}{1}
\bluecube {0}{1}{1}
\redcube  {0}{2}{1}
%\bluecube {0}{3}{1}

%\bluecube {0}{0}{2}
\redcube  {0}{1}{2}
\bluecube {0}{2}{2}
%\redcube  {0}{3}{2}

%\bluecube {0}{1}{3}
%\redcube  {0}{2}{3}

\draw[dashed] (0,1,0) -- (4,1,0);

% ========= X = 2 =========
% _ b r _
% b _ _ r
% r _ _ b
% _ r b _
\bluecube {3}{1}{0}
\redcube  {3}{2}{0}

\bluecube {3}{0}{1}
\redcube  {3}{3}{1}

\redcube  {3}{0}{2}
\bluecube {3}{3}{2}

\redcube  {3}{1}{3}
\bluecube {3}{2}{3}

\draw[dashed] (2.5,1,0) -- (7,1,0);

% ========= X = 4 (hole in middle) =========
% _ r b _
% r _ _ b
% b _ _ r
% _ b r _
\redcube  {6}{1}{0}
\bluecube {6}{2}{0}

\redcube  {6}{0}{1}
\bluecube {6}{3}{1}

\bluecube {6}{0}{2}
\redcube  {6}{3}{2}

\bluecube {6}{1}{3}
\redcube  {6}{2}{3}

\draw[dashed] (5.5,1,0) -- (10,1,0);

% ========= X = 6 (no hole, full chess) =========
% _ b r _
% b r b b
% r b r r
% _ r b _
%\bluecube {9}{1}{0}
%\redcube  {9}{2}{0}

%\bluecube {9}{0}{1}
\redcube  {9}{1}{1}
\bluecube {9}{2}{1}
%\bluecube {9}{3}{1}

%\redcube  {9}{0}{2}
\bluecube {9}{1}{2}
\redcube  {9}{2}{2}
%\redcube  {9}{3}{2}

%\redcube  {9}{1}{3}
%\bluecube {9}{2}{3}

\draw[dashed] (9,1,0) -- (10.8,1,0);

\end{tikzpicture}
    \caption{Two geometric simplicial complexes $A$ and $B$ in $\mathbb{R}^3$, together with a larger complex obtained by gluing copies of $A$ and $B$ along their boundary simplices, shown in an exploded view.}
    \label{fig:complex}
\end{figure}

\ifextended
\begin{proposition}
On the simplicial complex $\mathcal C$ of Figure~\ref{fig:complex}, the $k$-clique containment condition does holds for any $k \geq 2$:
\end{proposition}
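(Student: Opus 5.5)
We read the statement as asserting that the $k$-clique containment condition \emph{fails} for every $k \geq 2$ on $\mathcal C$; this is what the surrounding text claims. Since $\mathcal C$ is flag, the $k$-cliques of its $1$-skeleton $G$ are exactly the $(k-1)$-simplices of $\mathcal C$. We can therefore argue directly on simplices. The plan is to show that condition~\ref{cond2} can never be met for any admissible choice of subgraph $A$. We split by the value of $k$, using two properties of $\mathcal C$: it is pure of dimension~$3$, and it has the homology of $S^2$.

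First we check from the construction in Figure~\ref{fig:complex} that $\mathcal C$ is pure $3$-dimensional. Each cube of the gluing is triangulated by five tetrahedra: the central tetrahedron on alternate vertices, plus four corner tetrahedra. $\mathcal C$ is the union of these triangulated cubes, glued along boundary simplices, and contains no further vertices, edges or triangles. Hence every edge and every triangle of $\mathcal C$ is a face of some tetrahedron, and there are no simplices of dimension $\geq 4$.

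With purity in hand, the cases go as follows.
\begin{itemize}
\item For $k \geq 5$ there is no $k$-clique at all, so condition~\ref{cond2} fails.
\item For $k = 2$ (resp.\ $k=3$), an edge (resp.\ triangle) lies in a triangle (resp.\ tetrahedron) of $\mathcal C$, that is, in a $(k+1)$-clique of $G$. Again condition~\ref{cond2} fails, regardless of $A$.
\item For $k=4$, every tetrahedron is maximal, so condition~\ref{cond2} only asks that $A$ contain some $4$-clique. We rule this out as in the proof of the previous theorem. Let $S$ be the mod~$2$ sum of the tetrahedra corresponding to $4$-cliques of $A$. Condition~\ref{cond1} says every triangle of $A$ lies in exactly two of them, so $\partial S = 0$ and $S$ is a $3$-cycle of $\mathcal C$ with $\mathbb Z/2\mathbb Z$ coefficients. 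Since $\mathcal C$ has no $4$-simplices, $Z_3(\mathcal C;\mathbb Z/2\mathbb Z) = H_3(\mathcal C;\mathbb Z/2\mathbb Z)$. Because $|\mathcal C|$ is homotopy equivalent to $S^2$, this group is $H_3(S^2;\mathbb Z/2\mathbb Z)=0$. Hence $S=0$, so $A$ contains no $4$-clique, and condition~\ref{cond2} fails.
\end{itemize}

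The homological part is routine once the homotopy type $|\mathcal C|\simeq S^2$ is granted, as stated in the text. The main obstacle is the combinatorial verification of purity together with flagness. One must confirm that the gluing of the copies of $A$ and $B$ along boundary simplices creates no new clique of $G$ that is not a simplex. One must also confirm that it leaves no dangling edge or triangle, for instance along the dashed connections or at the holes of the exploded layers.

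I would handle this locally. The alternating red/blue pattern means adjacent cubes induce the same diagonal on their shared square face, so the triangulations agree on faces. Any clique then lies within the closed star of a vertex. Each such star is a union of triangulated cubes around that vertex, whose flagness and purity can be checked finitely.
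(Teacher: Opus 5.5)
Your proof is correct. For $k\in\{2,3\}$ and $k\ge 5$ it is the paper's argument; the real difference is the case $k=4$.

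The paper argues locally. It asserts that every tetrahedron of $\mathcal C$ has a triangle lying in no other tetrahedron. So as soon as $A$ contains a $4$-clique, that triangle is a $3$-clique of $A$ contained in only one $4$-clique, and condition~\ref{cond1} fails. You argue globally instead: condition~\ref{cond1} makes the $4$-cliques of $A$ a mod-$2$ $3$-cycle. That cycle must vanish because $\mathcal C$ has no $4$-simplices and $H_3(\mathcal C;\mathbb Z/2\mathbb Z)=0$.

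Your route is more robust. The paper's local claim fails as stated for the central tetrahedron of each five-tetrahedra cube: each of its four faces is shared with a corner tetrahedron of the same cube, so it has no face on the boundary. The paper therefore needs an extra propagation step. Condition~\ref{cond1} forces $A$ to contain an adjacent corner tetrahedron, and corner tetrahedra do have free faces, since every cube of $\mathcal C$ has a pair of opposite square faces not shared with another cube. Your cycle argument treats all tetrahedra at once and needs no such case analysis.

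The paper's route, once patched, has the advantage of being purely combinatorial. Yours imports $H_3=0$ from the equivalence $|\mathcal C|\simeq S^2$, which the text asserts but does not prove. You can remove that dependence. In any finite $3$-dimensional complex geometrically embedded in $\mathbb R^3$, each triangle has at most one tetrahedron on each side. So for a mod-$2$ $3$-cycle, the indicator function of its support is locally constant off the $1$-skeleton and vanishes far away. Hence the cycle is zero, and $H_3(\mathcal C;\mathbb Z/2\mathbb Z)=0$ follows from the embedding in the figure alone.
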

\begin{proof} We distinguish cases depending on the value of $k$:
    \begin{itemize}
    \item For $k \in \{2, 3\}$, all edges (resp. triangles) are contained in at least one triangle (resp. $3$-simplex), so condition~\ref{cond2} cannot hold.
    \item For $k = 4$, Every $3$-simplex of $\mathcal C$ has at least one $2$-simplex on the boundary of $\mathcal C$, and so this $2$-simplex can only be the boundary of one $3$-simplex. So, if condition~\ref{cond2} holds, $A$ must contain a $3$-simplex, and condition~\ref{cond1} cannot hold.
    \item For $k \geq 5$, there is no $(k-1)$-simplices in $\mathcal C$, so condition~\ref{cond2} cannot hold. \qed
\end{itemize}
\end{proof}
\fi

\section{An explicit protocol for simplex agreement}\label{section:explicit}

We proved our solvability Theorem~\ref{th:main} via Theorem~\ref{th:carrier}, which does not provide an explicit protocol. The goal of this section is to obtain explicit, and perhaps intuitive, protocols for the restricted class of \emph{contractible} simplicial complexes. 

Let us now fix $\mathcal C$, a collapsible, finite, non-empty, path-connected simplicial complex.
Using the piecewise flat equilateral metric on $|\mathcal C|$, we would like to use the same iterated barycenter technique as in the protocols for continuous approximate agreement~\cite{dolev1986reaching,mendes2015multidimensional} to solve approximate agreement. However, $|\mathcal C|$ is not Euclidean, hence the notions of convexity and barycenter are ill-defined. Nevertheless, we show in the following the we can use a weaker notion of convex metric spaces: CUB spaces~\cite{roller2016poc,haettel2025link}(Definition~\ref{def:bicombing}).

We show in Section~\ref{section:CUB} how to solve approximate agreement on CUB spaces with an explicit midpoint protocol,  how to get a CUB space from a contractible simplicial complex in Section~\ref{section:CUB_from}, and how to deduce an explicit protocol for some \emph{subdivisions} of these complexes in Section~\ref{section:recovering}:

\begin{theorem}\label{th:snd}
There exists a finite subdivision $\mathcal K$ of $|\mathcal C|$ and an explicit protocol solving simplex agreement on $\mathcal K$.
%is solvable in the wait-free layered immediate snapshot model.
\end{theorem}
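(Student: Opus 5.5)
We follow the roadmap announced before the statement, and prove Theorem~\ref{th:snd} in three stages: an explicit protocol on abstract CUB spaces, a CUB structure on $|\mathcal C|$ (after subdivision), and a rounding step from continuous approximate agreement to simplex agreement.

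\emph{Stage 1: midpoint protocol on a CUB space.} Let $(X,d)$ be a CUB space. Each point set then has a well-defined convex hull, and there is a geodesic bicombing $\gamma_{x,y}:[0,1]\to X$ that is consistent and convex, i.e.\ $d(\gamma_{x,y}(t),\gamma_{x',y'}(t))\le (1-t)d(x,x')+t\,d(y,y')$. We run the layered immediate snapshot model for $R$ rounds. In each round, a process writes its current point and snapshots the others. It then replaces its point by an iterated midpoint of the points it sees, computed in a canonical order: for instance, fold the bicombing over the sorted multiset of seen values with weights $1/k$.

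\emph{Validity.} Convex hulls are closed under the bicombing, so all points remain in the convex hull of the inputs.

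\emph{Agreement.} Immediate snapshot views are totally ordered by containment, so two processes see sets $V\subseteq W$. We will show that their new points are at distance at most $c\cdot\operatorname{diam}(W)$ for some constant $c<1$ depending only on $n$. The argument repeatedly applies the convexity inequality, comparing the iterated barycenter of $V$ with that of $W$. The diameter of the current configuration then contracts geometrically, and $R=O(\log(D/\epsilon))$ rounds suffice, where $D$ is the diameter of the hull of the inputs.

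\emph{Stage 2: a CUB structure on $|\mathcal C|$.} Using collapsibility, we find a subdivision of $|\mathcal C|$ that is PL-homeomorphic to a CAT(0)-like cubical complex, along the lines of Figure~\ref{fig:CUB}. Concretely, we would try to show by induction on the collapse sequence that $|\mathcal C|$ is PL-homeomorphic to a finite collapsible cubical complex whose links are flag. Such a complex is CAT(0) and hence CUB with the $\ell^\infty$ or $\ell^2$ metric. The inductive step is the reverse of an elementary collapse $(\tau,\sigma)$: it glues a cone on part of a boundary sphere along a ball, and should preserve the CUB property.

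\emph{Stage 3: rounding to simplex agreement.} Fix a CUB metric on $|\mathcal C|$ and choose a subdivision $\mathcal K$ fine enough that every simplex of $\mathcal K$ has diameter below $\epsilon$. In addition, any set of points of diameter at most $\epsilon$ must be contained in the open star of a common vertex. Processes first run Stage 1 with their input vertices as points. Each process then outputs a vertex of the carrier simplex of its final point, chosen in the Sperner/simplicial approximation manner, i.e.\ so that all outputs span a simplex of $\mathcal K$. For validity, when the inputs span a simplex $\sigma$, the convex hull of the inputs is the geometric simplex $|\sigma|$. This requires the CUB convexity to make simplices of $\mathcal K$ convex, which we arrange in Stage 2. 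The outputs then stay in $\sigma$.

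\emph{Main obstacle.} We expect Stage 2 to be the hardest part: ensuring that collapses yield a genuine CUB structure in which the relevant simplices are convex. A secondary difficulty is the uniform contraction constant in Stage 1 for a general bicombing.
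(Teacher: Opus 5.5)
Your proposal has two genuine gaps, in Stages~2 and~3; Stage~1 is sound. The decisive step is Stage~2, and it is not carried out. The sentence \emph{we would try to show \dots\ should preserve the CUB property} restates a deep theorem, and the naive inductive step you describe fails. Gluing $\Gamma\times[0,1]$ onto a CAT(0) cube complex along a cubical ball $\Gamma$ preserves Gromov's flag-link condition only when $\Gamma$ is locally convex. The image of the attaching ball of an anti-collapse need not be locally convex. For example, glue $\Gamma\times[0,1]$ to a single square along an L-shaped path $\Gamma$ formed by two of its sides. Three squares then meet pairwise at the corner, whose link is an unfilled triangle (total angle $3\pi/2$), so the result is not CAT(0).

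The paper does not reprove this. It invokes Theorem~\ref{th:adiprasito} (Adiprasito--Funar) and then checks two facts that your argument needs but never establishes:
(i) the simplicial subdivision $\mathcal K$ realizing the PL-homeomorphism is finite; otherwise the pulled-back cube metric is not a genuine metric and $|\mathcal C|$ is not CUB (Figure~\ref{fig:infinite});
(ii) each simplex of $\mathcal K$ maps linearly into a single cube, which is convex with Euclidean geodesics, so simplices of $\mathcal K$ are $\sigma$-convex.
Your Stage~3 validity argument rests on (ii). It is false for simplices of $|\mathcal C|$ itself (Figure~\ref{fig:CUB}) and for an arbitrary refinement, so any further subdivision you take must be linear inside the cubes.

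Stage~3 has a second gap. Choosing \emph{a vertex of the carrier simplex \dots\ in the Sperner/simplicial approximation manner} is not a rule a process can run knowing only its own point $y_i$, and an arbitrary vertex of the carrier fails. Two $\epsilon$-close points near $w$, in the interiors of triangles $\{w,a,b\}$ and $\{w,c,d\}$ sharing only $w$, may output $a$ and $c$, which need not span an edge. Your quantifiers are also off. Requiring simplices of $\mathcal K$ to have diameter $<\epsilon$ is unnecessary and works against what you need, since refining $\mathcal K$ shrinks the Lebesgue number of its open-star cover. Fix $\mathcal K$ first, then $\epsilon$.

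The paper rounds with the face-projection Protocol~\ref{def:protocol}, using Lemmas~\ref{lemma:corner1} and~\ref{lemma:corner2} with $\epsilon<\frac{d_{min}}{2C_m(3C)^d}$. Your open-star idea can be completed more simply. Let $\lambda$ be a Lebesgue number of the open-star cover of the compact space $|\mathcal K|$, take $\epsilon<\lambda/2$, and let process $i$ output a vertex $z_i$ with $B(y_i,\epsilon)\subseteq\operatorname{st}(z_i)$. Then every $y_j$ lies in $\bigcap_i\operatorname{st}(z_i)$, so the $z_i$ span a simplex, and $z_i\in\operatorname{carr}(y_i)$ gives validity.

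Stage~1 differs genuinely from the paper's id-indexed $f_n$, since yours is colorless. The contraction you left open follows from convexity of balls. Write $F(W)=b_m$ for the fold $b_k=\sigma_{1/k}(b_{k-1},w_k)$ over $W=\{w_1,\dots,w_m\}$. Induction with $d(\sigma_t(x,y),z)\le(1-t)\,d(x,z)+t\,d(y,z)$ gives $d(F(W),w)\le(1-\frac1m)\operatorname{diam}W$ for every $w\in W$. Hence the convex ball $B\bigl(F(W),(1-\frac1m)\operatorname{diam}W\bigr)$ contains $V$, and therefore $\operatorname{hull}(V)\ni F(V)$. This yields the factor $1-\frac1n$, at least as good as the paper's $1-2^{1-n}$.
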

%However, our reduction does not show that $|\mathcal C|$ is CUB, and 
Without relying on a subdivision of $|\mathcal C|$, we only obtain solvability under \emph{weak validity}: if the inputs are identical, then all the outputs must equal that input.

\begin{corollary}
\label{cor:weaker}
There is an explicit protocol solving simplex agreement on $\mathcal C$ under weak validity.
\end{corollary}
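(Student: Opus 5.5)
The plan is to get the corollary from the CUB structure on $|\mathcal C|$ provided by Section~\ref{section:CUB_from}, combined with the explicit midpoint protocol of Section~\ref{section:CUB}, followed by a rounding step back to vertices. Since $\mathcal C$ is collapsible, I will take as given from Section~\ref{section:CUB_from} that $|\mathcal C|$, with a suitable PL-equivalent metric (for instance through a cubical complex PL-homeomorphic to $|\mathcal C|$, as in Figure~\ref{fig:CUB}), is a CUB space. It is therefore geodesic and carries a consistent bicombing, and the $\epsilon$-agreement protocol of Section~\ref{section:CUB} runs on it for any number of processes.

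\emph{The protocol.} Each process starts from its input vertex $x_i$, regarded as a point of $|\mathcal C|$. It runs the explicit CUB $\epsilon$-agreement protocol in the layered immediate snapshot model for a number of rounds $R$. The number $R$ is computed in advance from $\epsilon$ and the diameter of $|\mathcal C|$, which is finite because $\mathcal C$ is finite and path-connected. This gives an output point $p_i \in |\mathcal C|$. The process then rounds $p_i$ to a vertex $y_i$ using a fixed rounding rule.

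For the rounding, I choose $\epsilon$ smaller than a Lebesgue number of the open cover of $|\mathcal C|$ by open vertex stars $\operatorname{st}(v)$. Process $i$ outputs a vertex $y_i$ such that the ball $B(p_i, \epsilon)$ lies in $\operatorname{st}(y_i)$, breaking ties by a fixed total order on the vertices. Agreement then follows from the fact that all the $p_i$ lie within $\epsilon$ of one another. Set $p = p_1$. Each $p_j$ lies in $B(p_j,\epsilon) \subseteq \operatorname{st}(y_j)$, and $p \in B(p_j,\epsilon)$ as well, so $p$ lies in every $\operatorname{st}(y_j)$. Open stars of vertices intersect exactly when those vertices span a simplex. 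Hence the $y_j$ span the simplex carrying $p$, which is the agreement condition.

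\emph{Weak validity.} If all inputs equal a vertex $v$, then convexity validity for the CUB protocol gives every $p_i = v$. The rounding rule must then return $v$. I will enforce this by letting a process output its own input vertex whenever $p_i$ equals a vertex, which is legitimate because $B(v,\epsilon) \subseteq \operatorname{st}(v)$.

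The step I expect to be delicate is making the rounding rule both well defined and consistent with the CUB metric, which may differ from the piecewise flat equilateral metric. It is well defined because a Lebesgue number always exists, by compactness and because the CUB metric induces the same topology. The corollary only claims weak validity because full simplex validity fails here: the convex hull of vertices spanning $\sigma$ in the CUB structure need not stay inside $|\sigma|$. That is why the stronger statement needs the subdivision of Theorem~\ref{th:snd}.
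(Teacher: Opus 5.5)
Your argument is correct, but your rounding step takes a different route from the paper's.

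\textbf{The paper's route.} The paper gets the corollary as a by-product of Theorem~\ref{th:snd}. After the CUB protocol on $|\mathcal C|\simeq\mathcal K$, it runs the face-projection Protocol~\ref{def:protocol}. Its correctness (Proposition~\ref{prop:common_simplex}) rests on Lemmas~\ref{lemma:corner1} and~\ref{lemma:corner2} and on the threshold $\epsilon<d_{min}/(2C_m(3C)^d)$, and it lands the processes on vertices of a common simplex of the subdivision $\mathcal K$. Each process then replaces a vertex of $\mathcal K$ that is not a vertex of $\mathcal C$ by a vertex of the simplex of $\mathcal C$ carrying it. Agreement holds because every simplex of $\mathcal K$ lies in a simplex of $\mathcal C$.

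\textbf{Your route.} You skip the subdivision and round the CUB output directly to a vertex of $\mathcal C$. You use a Lebesgue number of the open-star cover and the fact that open stars meet if and only if their vertices span a simplex. This is the standard simplicial-approximation argument. It only needs the CUB metric to induce the usual topology on the compact space $|\mathcal C|$, and it avoids the best-approximation constants entirely. Applied to the open stars of $\mathcal K$, the same rounding would even reprove Theorem~\ref{th:snd}: $p_i\in\operatorname{st}(y_i)$ forces $y_i$ into the carrier of $p_i$, hence into every simplex of $\mathcal K$ containing $p_i$.

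\textbf{What each buys.} The paper's route gives a concrete formula for the threshold in terms of $d_{min}$ and angle constants. Your Lebesgue number comes from compactness. For the protocol to count as explicit, you should add that a Lebesgue number and the test $B(p_i,\epsilon)\subseteq\operatorname{st}(y)$ are computable, because $\mathcal K$ is finite and the metric is piecewise linear. Fixing the number of rounds from the diameter of $|\mathcal C|$, rather than from the input spread $D$ that processes do not know, is a good choice.

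\textbf{Two small repairs.}
\begin{enumerate}
\item The CUB protocol only guarantees $d(p_i,p_j)\le\epsilon$. Run it to precision $\epsilon/2$ so that $p_1\in B(p_j,\epsilon)$ holds strictly. If you use the diameter convention for Lebesgue numbers, also take $2\epsilon$ below it.
\item Drop the clause letting a process ``output its own input vertex whenever $p_i$ equals a vertex''. Read literally, it can break agreement: if all processes see each other in every round, they compute the same point, which may be a vertex while their inputs are pairwise non-adjacent. It is also unnecessary. A vertex $v$ lies in no open star $\operatorname{st}(w)$ with $w\neq v$, so your Lebesgue rule already returns $y_i=v$ whenever $p_i=v$, which is exactly what weak validity needs.
\end{enumerate}
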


\subsection{Approximate agreement on CUB spaces}\label{section:CUB}

In this section, we generalize the technique used for solving agreement problems on metric spaces~\cite{dolev1986reaching,mendes2015multidimensional} to CUB spaces~\cite{roller2016poc,haettel2025link}.
Let us now fix $X$ a convexly uniquely bicombable (CUB) space, i.e, $X$ is a  metric space such that $X$, and each ball of $X$, admits a unique convex bicombing:

\begin{definition}[\cite{haettel2025link}]\label{def:bicombing}
A \textbf{convex bicombing} on $X$ is a map $\sigma: X \times X\times [0,1] \to X$, denoted $\sigma_t(x,y)$ for all $x,y \in X$ and $t \in [0,1]$, such that $\sigma$ is:
\begin{itemize}
\item \textbf{Geodesic}: for all $x, y \in X$, the map $t \in [0, 1] \mapsto \sigma_t(x, y)$ is a \textbf{constant speed} \textbf{ geodesic} from $\sigma_0(x, y) = x$ to $\sigma_1(x, y) = y$, i.e., $t \mapsto \sigma_t(x, y)$ is a shortest path from $x$ to $y$ such that $d(x, \sigma_t(x, y)) = t \times d(x,y)$ for all $t \in [0,1]$;
\item \textbf{Symmetric}: for all $x, y \in X$ and $t \in [0, 1]$, we have $\sigma_t(x, y) = \sigma_{1-t}(y, x)$;
\item \textbf{Affine}: for all $x, y \in X$ and $s, t \in [0, 1]$, we have $\sigma_{st}(x, y) = \sigma_s(x, \sigma_t(x, y))$;
\item \textbf{Convex}: for all $x, y, x', y' \in X$, the map $t \mapsto d(\sigma_t(x, y), \sigma_t(x', y'))$ is convex. 
\end{itemize}

A subspace $Y \subseteq X$ is \textbf{$\sigma$-convex} if for all $x,y \in Y$ and $t\in [0,1]$, $\sigma_t(x,y) \in Y$. The \textbf{$\sigma$-convex hull} of a subspace $Y \subset X$ is the smallest $\sigma$-convex set containing $Y$. If $Y$ is an Euclidean subspace of $X$, then $\sigma$-convexity on $Y$ agrees with Euclidean convexity, since the unique choice of bicombing on $Y$ is the Euclidean one. The \textbf{midpoint operator} $x \mid y = \sigma_{\frac12}(x,y)$ is defined for all $x,y \in X$. 
\end{definition}

\begin{figure}[t]
    \centering
    \begin{tikzpicture}[scale=.65]
    % First path
    \begin{scope}[decoration={
        markings,
        mark=at position 0.667 with {
            \coordinate (P);
            \fill (0,0) circle (2pt) node[below , black] at (-.2,-.2) {$\sigma_{\frac 2 3}(x',y')$};
        }
    }]
    \draw[thick, blue!70!black, dashed, postaction={decorate}] 
        (0,.5) .. controls (1.7,1.1) and (3,1.9) .. (6,.5);
    \end{scope}

    \begin{scope}[decoration={
        markings,
        mark=at position 0.667 with {
            \coordinate (Q);
            \fill[green!60!black] (0,0) circle (2pt) node[above, black] at (0,.15) {$\sigma_{\frac 2 3}(x,y)$};
        }
    }]
    \draw[thick, green!60!black, dashed, postaction={decorate}] 
        (0,2) .. controls (2,1) and (4,2) .. (6,2);
    \end{scope}
    
    % Draw line between the two 2/3 points
    \draw[thick, red, -] (P) -- (Q);
    
    % % Add label on the center left of this line
    % \node[black, font=\small, fill=white, inner sep=2pt, rounded corners] 
    %     at ($(P)!0.44!(Q)$) [right=0.2cm] {$d(m,m')$};
    
    % Endpoints for first path
    \fill (0,.5) circle (2pt) node[below left] {$x'$};
    \fill (6,.5) circle (2pt) node[below right] {$y'$};
    
    % Endpoints for second path
    \fill (0,2) circle (2pt) node[above left] {$x$};
    \fill (6,2) circle (2pt) node[above right] {$y$};
    
\end{tikzpicture}
    \caption{Two geodesics in a convex bicombing, with corresponding points marked at $t = \frac{2}{3}$. The bicombing is convex: the red edge decreases in length from $t = 0$ to $t = 0.4$, then increases from $t = 0.4$ to $t = 1$.}
        \label{fig:convex}
\end{figure}

Intuitively, a space is CUB when two points can be joined by a canonical shortest path, and when this canonical choice satisfies a convexity condition, see Figure~\ref{fig:convex}, which forces the space to be non-positively curved. Several non-positively curved spaces are CUB, including Euclidean and CAT(0)\footnote{The sole property of CAT(0) spaces required in this article is that they are CUB.} spaces~\cite[Examples 1.3]{haettel2025link}. However, unlike CAT(0), there are combinatorial criteria for deciding whether a simplicial complex is CUB~\cite[Section 4]{haettel2025link}. In CAT(0), such criteria exist for cubical complexes only~\cite{gromov1987hyperbolic}.

\begin{remark}
\emph{Barycentric algebras}~\cite{stone1949postulates}, spaces equipped with a binary ``convex combination'' operator satisfying certain axioms, also offer a notion of convexity on metric spaces. CUB spaces cover different examples: not all CUB spaces are barycentric algebras. In particular, it is much easier to verify that the geometric realization of a simplicial complex is CUB rather than a barycentric algebra: we show in Section~\ref{section:CUB_from} that \emph{collapsible} simplicial complexes are CUB. 
\end{remark}

Let us now define the $\epsilon$-approximate agreement problem on $X$:

\begin{definition}[$\epsilon$-approximate agreement on CUB spaces]
Fix $\epsilon > 0$. A fixed number $n$ of processes $(P_i)_{i \in [n]}$ are each given a point $x_i \in X$. Each process must output a point $y_i \in X$, such that these conditions hold:
\begin{itemize}[noitemsep]
\item Validity: the outputs lie in the $\sigma$-convex hull of the inputs.
\item Agreement: the outputs are $\epsilon$-close to each other.
\end{itemize}
\end{definition}

Fix $\epsilon > 0$ and $n \in \mathbb N$. %Let $D$ denote the diameter of $X$.
We now propose a $N = \lceil \log_{1 - \frac{1}{2^{n-1}}}(\frac \epsilon D) \rceil$ steps protocol for solving $\epsilon$-approximate agreement on $X$ with $n$ processes in the layered immediate snapshot model, where $D$ denotes the maximal distance between a pair of inputs (or $1$ if all inputs are identical). The intuitive idea of computing barycenters at each round can still be used, even though the notions of barycenter and convex combinations are not well-defined in CUB spaces: only convex combinations of two points are defined.

\begin{protocol}[$\epsilon$-approximate agreement]\label{def:protocol_CUB}
At each round $k \in \mathbb N$, the processes will exchange their view and compute a new point: 
for all $i \in [n]$ and $k \in \mathbb N$, let $x^k_i$ denote the point of the $i$th process at the beginning of the $k$th step. Hence, $x^0_i = x_i$. 
At the start of the $k$th round, the processes proceed with the immediate snapshot protocol to exchange their points. Each process $i \in [n]$ then computes: 
$$x^{k+1}_i = f_n(p_1, \ldots, p_n) \coloneq p_1 \mid (p_2 \mid (\cdots (p_{n-1} \mid p_n ) \cdots))$$ where for all $j \in [n]$, $p_j$ denotes the point $x^k_j$ if the $j$th process view was received, of the current point $x^k_i$ otherwise. At the end of the last round $N$, each process $i \in [n]$ returns $y_i = x^N_i$.
\end{protocol}

\ifextended

\begin{remark}\label{remark:color_full} 
Protocol~\ref{def:protocol_CUB} is \emph{colored} in the sense that processes keep track of the ids of the processes from whom they receive information. These ids are used only in the the function $f$. Indeed, $f$ is asymmetric: not only because it assigns unbalanced weights to its inputs, but rather because CUB spaces are not necessarily barycentric algebras, hence the following associativity axiom might not hold:
\begin{equation}
    \sigma_{\lambda}(x, \sigma_{\mu}(y, z)) = \sigma_{\lambda\mu} \left( \sigma_{\frac{\lambda(1-\mu)}{1-\lambda\mu}}(x, y), z \right) \qquad (\lambda\mu \neq 1)\label{eq:bary}
\end{equation} 

To see an example, consider the tree $T$ rooted at $O$ with three edges $O$-$x$, $O$-$y$ and $O$-$z$. $T$ is a tree, i.e., a $1$-dimensional simplicial, or cubical complex. the geometric realization of a tree is always CAT(0), hence CUB, when endowed with the piecewise metric that assigning length $1$ to each edge. Now, for $\lambda = 0.5 = \mu$ and $\lambda\mu \neq 1$, we have: 
\begin{align*}
    &\sigma_{\frac 1 2}(x, \sigma_{\frac 1 2}(y, z)) = \sigma_{\frac 1 2}(x, O) \neq O & & \text{ but } & \sigma_{\frac 1 4} \left( \sigma_{\frac 1 3}(x, y), z \right) = O.
\end{align*}
Therefore, Equation~\eqref{eq:bary} does not hold, and $T$ is not a barycentric algebra. 
However, if a CUB space is also a barycentric algebras respectively to $\sigma$, then one can define a symmetric version of $f$, which makes the protocol colorless.   
\end{remark}
\else
\begin{remark}\label{remark:color}
Protocol~\ref{def:protocol_CUB} is \emph{colored} in the sense that processes keep track of the ids of the processes from whom they receive information. Indeed, $f$ is in general not invariant under permutation of its arguments. We show in~\cite[Remark 2]{albert2026solving} that if $X$ is also a barycentric algebra with respect to $\sigma$, then $f$ is invariant under permutation and the protocol becomes \emph{colorless}. We provide in addition an example of a CUB space that is not a barycentric algebra. 
\end{remark}
\fi

\ifextended
We now show that Protocol~\ref{def:protocol_CUB} is correct, i.e., that it satisfies validity and agreement, using the following lemmas.
\else
We show that Protocol~\ref{def:protocol_CUB} is correct, i.e., that it satisfies validity and agreement, using the following lemma, whose proof can be found in~\cite{albert2026solving}. 
\fi

\ifextended

\begin{lemma}\label{lemma:sum1}
For all $m$-tuples of points $(p_1, \ldots, p_m)$ and $(q_1, \ldots, q_m)$ in $X$:
\begin{equation}\label{eq:convex}
d(f_m(p_1, \ldots, p_m), f_m(q_1,\ldots q_m)) \leq {\sum_{i=1}^{m-1} \frac{1}{2^i}d(p_i,q_i) } + \frac{1}{2^{m-1}}d(p_m,q_m)
\end{equation}
\end{lemma}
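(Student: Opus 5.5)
We argue by induction on $m$, using the recursive structure $f_m(p_1,\ldots,p_m) = p_1 \mid f_{m-1}(p_2,\ldots,p_m)$ for $m \geq 2$, with the convention $f_1(p) = p$. The base case $m=1$ reads $d(p_1,q_1) \leq d(p_1,q_1)$ (the sum is empty and $\frac{1}{2^0}=1$), so there is nothing to prove.

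The single geometric ingredient is a midpoint inequality derived from the convexity axiom of the bicombing. For all $x,y,x',y' \in X$,
\[
d(x \mid y, x' \mid y') \leq \tfrac12 d(x,x') + \tfrac12 d(y,y').
\]
To see this, apply the convexity of $t \mapsto d(\sigma_t(x,y),\sigma_t(x',y'))$ at $t=\frac12$ between $t=0$ and $t=1$. Since $\sigma_0(x,y)=x$, $\sigma_0(x',y')=x'$, $\sigma_1(x,y)=y$ and $\sigma_1(x',y')=y'$ by the geodesic property, the value at $\frac12$ is at most the average of the endpoint values $d(x,x')$ and $d(y,y')$.

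For the inductive step with $m \geq 2$, we set $x = p_1$, $y = f_{m-1}(p_2,\ldots,p_m)$, $x'=q_1$ and $y' = f_{m-1}(q_2,\ldots,q_m)$. The midpoint inequality then gives
\[
d(f_m(p),f_m(q)) \leq \tfrac12 d(p_1,q_1) + \tfrac12 d\bigl(f_{m-1}(p_2,\ldots,p_m), f_{m-1}(q_2,\ldots,q_m)\bigr).
\]
The induction hypothesis bounds the second distance by $\sum_{i=1}^{m-2} \frac{1}{2^i} d(p_{i+1},q_{i+1}) + \frac{1}{2^{m-2}} d(p_m,q_m)$. Multiplying by $\frac12$ and reindexing $j=i+1$ turns this into $\sum_{j=2}^{m-1}\frac{1}{2^j}d(p_j,q_j) + \frac{1}{2^{m-1}}d(p_m,q_m)$. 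Adding $\frac12 d(p_1,q_1)$ gives exactly the right-hand side of \eqref{eq:convex}.

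None of these steps is a real obstacle. The only point to be careful about is that convexity is used for two different pairs of geodesics simultaneously, which is exactly what the definition of a convex bicombing provides, and that only the geodesic endpoint conditions, and not uniqueness of the bicombing, are needed. The rest is bookkeeping of the powers of $2$ in the reindexing.
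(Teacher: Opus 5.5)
Your proof is correct and essentially the paper's argument: induction on $m$, with the midpoint inequality $d(x\mid y, x'\mid y') \leq \frac12 d(x,x') + \frac12 d(y,y')$ obtained from convexity of the bicombing at $t=\frac12$. The only cosmetic difference is the recursion used: you peel off the outermost midpoint $p_1 \mid f_{m-1}(p_2,\ldots,p_m)$, whereas the paper merges the innermost pair into $p' = p_m \mid p_{m+1}$, applies the induction hypothesis to $(p_1,\ldots,p_{m-1},p')$, and then bounds $d(p',q')$ by convexity.
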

\begin{proof}
Let us show the result by induction on $m > 0$: 
\begin{itemize}
    \item For $m = 1$, we have $f_1(x) = x$ for all points $x \in X$, so Equation~\eqref{eq:convex} holds.
    \item Suppose Equation~\eqref{eq:convex} holds for $m > 0$. Consider two tuples $(p_1, \ldots, p_{m+1})$ and $(q_1, \ldots, q_{m+1})$, and let $p'$ and $q'$ denote $p_m \mid p_{m+1}$ and $q_m \mid q_{m+1}$, respectively. We have the following inequality:
    \begin{align*}
         &d(f_{m+1}(p_1, \ldots, p_{m+1}), f_{m+1}(q_1,\ldots q_{m+1}))\\ &=  d(f_{m}(p_1, \ldots, p_{m-1},p'), f_{m}(q_1,\ldots q_{m-1},q')) \\ &\leq  {\sum_{i=1}^{m-1} \frac{1}{2^i}d(p_i,q_i) } + \frac{1}{2^{m-1}}d(p',q') \\%& \text{ by IH}\\
        &\leq  {\sum_{i=1}^{m-1} \frac{1}{2^i}d(p_i,q_i) } + \frac{1}{2^{m-1}}(\frac12d(p_m,q_m)+\frac12d(p_{m+1},q_{m+1}))\\% & \text{by convexity of $\sigma$} \\
        & = {\sum_{i=1}^{m} \frac{1}{2^i}d(p_i,q_i) } + \frac{1}{2^{m}}d(p_{m+1},q_{m+1})
    \end{align*}
    where the third step uses convexity of the bicombing $\sigma$ applied to $t= \frac12$. \qed
    \end{itemize}     
\end{proof}
\fi

\begin{lemma}\label{lemma:sum2}
For all $d >0$ and $m$-tuples $(p_1, \ldots, p_{m})$ and $(q_1, \ldots, q_{m})$ such that $d(p_i,q_i) \leq d$ for all $i\in [m]$ and $p_j = q_j$ for some $j \in [m]$, we have: \begin{equation}\label{eq:ineq}
        d(f_m(p_1,\ldots,p_m),f_m(q_1,\ldots, q_m)) \leq d \times (1 - \frac{1}{2^{m-1}}).
    \end{equation}
\end{lemma}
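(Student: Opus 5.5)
We will derive Lemma~\ref{lemma:sum2} directly from Lemma~\ref{lemma:sum1}. The idea is that the weighted bound of Equation~\eqref{eq:convex} is a convex combination of the distances $d(p_i,q_i)$: the weights $w_i = \frac{1}{2^i}$ for $1 \leq i \leq m-1$ and $w_m = \frac{1}{2^{m-1}}$ sum to exactly $1$. If one index $j$ contributes $0$ instead of at most $d$, we lose at least the weight $w_j$. So the plan is to bound the right-hand side of Equation~\eqref{eq:convex} by $d \times (1 - w_j)$ and then minimize $w_j$ over $j$.

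Concretely, we first check that $\sum_{i=1}^{m-1} \frac{1}{2^i} + \frac{1}{2^{m-1}} = \left(1 - \frac{1}{2^{m-1}}\right) + \frac{1}{2^{m-1}} = 1$. Then we apply Lemma~\ref{lemma:sum1} to the two tuples. We bound every term with $i \neq j$ using $d(p_i,q_i) \leq d$, and we use $d(p_j,q_j) = 0$. This gives
\[
d(f_m(p_1,\ldots,p_m),f_m(q_1,\ldots,q_m)) \leq d\,(1 - w_j).
\]
Since every weight satisfies $w_j \geq \frac{1}{2^{m-1}}$ (the smallest weights are $w_{m-1} = w_m = \frac{1}{2^{m-1}}$), we get $1 - w_j \leq 1 - \frac{1}{2^{m-1}}$, which is Equation~\eqref{eq:ineq}.

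The degenerate case $m = 1$ deserves a remark. There $f_1(p_1) = p_1$ and the hypothesis forces $p_1 = q_1$, so the distance is $0 = d \times (1 - \frac{1}{2^0})$, and the bound still holds. There is no real obstacle here. The only point requiring care is identifying the minimal weight, which is attained at both $j = m-1$ and $j = m$, because $f_m$ weights its last two arguments equally. Equation~\eqref{eq:ineq} is therefore the worst case uniformly in $j$.
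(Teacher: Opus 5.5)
Your proposal is correct and follows essentially the paper's route: apply Lemma~\ref{lemma:sum1}, use $d(p_j,q_j)=0$, and bound every other term by $d$. The only difference is bookkeeping: you observe that the weights sum to $1$ with minimum weight $2^{-(m-1)}$, whereas the paper splits into the cases $j=m$ and $j<m$ and shifts indices in the latter.
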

\ifextended
\begin{proof}
    If $j = m$, $d(f_m(p_1,\ldots,p_m),f_m(q_1,\ldots, q_m)) \leq {\sum_{i=1}^{m-1} \frac{1}{2^i}d } = d \times (1 - \frac{1}{2^{m-1}})$. Otherwise:
\begin{align*}
    d(f_m(p_1,\ldots,p_m),f_m(q_1,\ldots, q_m)) &\leq {\sum_{i=1}^{j-1} \frac{1}{2^i}d } + {\sum_{i=j+1}^{m-1} \frac{1}{2^i}d } + \frac{1}{2^{m-1}}d\\
    &\leq {\sum_{i=1}^{j-1} \frac{1}{2^i}d } + {\sum_{i=j+1}^{m-1} \frac{1}{2^{i-1}}d } + \frac{1}{2^{m-1}}d \\
    & = {\sum_{i=1}^{m-1} \frac{1}{2^i}d } = d \times (1 - \frac{1}{2^{m-1}}).
    \end{align*} \qed
\end{proof}
\fi

\begin{theorem}
For any CUB space $X$, and any $\epsilon>0$, Protocol~\ref{def:protocol_CUB} solves $\epsilon$-approximate agreement on $X$.
\end{theorem}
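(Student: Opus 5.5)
We need to establish termination, validity and agreement for Protocol~\ref{def:protocol_CUB}. Termination is immediate: the protocol runs a fixed number $N$ of layered immediate snapshot rounds, and each round involves only finitely many midpoint computations. The work is in validity and agreement, both handled by induction on the round number $k$.

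\textbf{Validity.} Let $H$ be the $\sigma$-convex hull of the inputs $\{x_1,\ldots,x_n\}$. We show by induction on $k$ that every $x^k_i$ lies in $H$.
\begin{itemize}
\item For $k=0$ this holds by definition, since $x^0_i = x_i$.
\item At round $k$, each argument $p_j$ of $f_n$ is some $x^k_\ell$, hence lies in $H$.
\item $f_n(p_1,\ldots,p_n)$ is built from nested midpoints $p_{n-1}\mid p_n$, then $p_{n-2}\mid(\cdot)$, and so on. Each midpoint $a\mid b=\sigma_{1/2}(a,b)$ of two points of $H$ stays in $H$ by $\sigma$-convexity.
\end{itemize}
An inner induction over the nesting depth therefore gives $x^{k+1}_i\in H$. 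The outputs $y_i = x^N_i$ thus lie in $H$.

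\textbf{Agreement.} Let $D_k=\max_{i,i'} d(x^k_i,x^k_{i'})$, restricted to processes that reach round $k$. We aim for the contraction
\[
D_{k+1}\le \Bigl(1-\frac{1}{2^{n-1}}\Bigr)D_k .
\]
Fix two processes $i,i'$ at round $k$, and let $(p_j)$ and $(q_j)$ be the tuples they feed to $f_n$.

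First we bound the coordinate differences. Each $p_j$ and $q_j$ is a current point of some process, so $d(p_j,q_j)\le D_k$.

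Next we find a shared coordinate. By the immediate snapshot property the two views are comparable under inclusion, say $V_i\subseteq V_{i'}$, and $i\in V_i$. Hence process $i'$ received $x^k_i$, so $q_i = x^k_i$. Process $i$ also uses its own value at coordinate $i$, so $p_i = x^k_i$. Thus $p_i=q_i$.

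Lemma~\ref{lemma:sum2} applied with $d=D_k$ and $j=i$ then yields the contraction.

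The subtle step is the choice of filler value. Process $i$ fills each missing coordinate $j\notin V_i$ with its own point $x^k_i$, rather than with $x^k_j$. So we must check that $d(p_j,q_j)\le D_k$ still holds there: $p_j = x^k_i$, and $q_j$ is either $x^k_j$ or $x^k_{i'}$, and both are within $D_k$ of $x^k_i$. We must also check that the shared coordinate argument survives this filling, and it does, since $p_i=q_i=x^k_i$ as shown above.

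Iterating the contraction gives $D_N\le (1-2^{-(n-1)})^N D_0$, where $D_0\le D$. With $N=\lceil \log_{1-2^{-(n-1)}}(\epsilon/D)\rceil$ this is at most $\epsilon$. Since the base $1-2^{-(n-1)}$ is less than $1$, the logarithm is decreasing, so rounding $N$ up only makes the bound smaller. The degenerate case $n=1$ is trivial, and if all inputs are equal the protocol never moves, so $D=1$ is harmless.

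The main obstacle is the agreement step. It needs the comparability of immediate snapshot views to produce a coordinate on which the two argument tuples coincide, and it needs the filler convention to keep every coordinate difference bounded by $D_k$. Once both are verified, Lemma~\ref{lemma:sum2} finishes the proof.
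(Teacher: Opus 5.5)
Your proof is correct and follows essentially the same route as the paper. Validity holds because $\sigma$-convex sets are closed under the midpoint operator, and agreement comes from applying Lemma~\ref{lemma:sum2} at each round with a coordinate shared by the two argument tuples (supplied by the containment property of immediate snapshot views), then iterating the factor $1-\frac{1}{2^{n-1}}$ over $N$ rounds; your explicit checks that the filler coordinates stay within $D_k$ and that $p_i=q_i=x^k_i$ spell out details the paper leaves implicit.
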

\begin{proof}
Let us show that protocol~\ref{def:protocol_CUB} satisfies validity and agreement.
For validity, any $\sigma$-convex subspace of $X$ is stable under the midpoint operator.
For agreement, let us show that after the $k$th round, the maximum distance $d_{k+1}$ between two points $x^{k+1}_i$ and $x^{k+1}j$ is less or equal to $d_{k} \times (1 - \frac{1}{2^{n-1}})$. It suffices to apply Inequality~\eqref{eq:ineq} given by Lemma~\ref{lemma:sum2} at each round. To do so, we must prove that at each step $k$, any two processes $i \neq j \in [n]$ both received the view of a common process $k \in [n]$. However, if process $i$ wrote before $j$, then both processes received the view of process $i$, and symmetrically. Initially, the processes' points are $D$-close to each other. Therefore, after $k$ rounds, the processes points are $D(1 - \frac{1}{2^{n-1}})^k$ close to each others. Thus, after the last step at round $N = \lceil\log_{1 - \frac{1}{2^{n-1}}}(\frac \epsilon D)\rceil$, the points are $\epsilon$-close to each other. \qed
\end{proof}

\subsection{CUB spaces from finite collapsible complexes}\label{section:CUB_from}

In this section, we show that $\mathcal C$ can be endowed with a structure of CUB space that is \emph{bounded}, in the sense that $\mathcal C$ has finite diameter.
Since $\mathcal C$ is finite, its geometric realization $|\mathcal C|$ is well-defined and finite-dimensional. We denote by $d$ its dimension. We can apply the following result:

\begin{theorem}[{\cite[Theorem 4]{AdiprasitoF21}}]\label{th:adiprasito}
Any finite-dimensional collapsible simplicial complex is PL-homeomorphic to a cubical complex which is
CAT(0) when endowed with the piecewise flat equilateral pseudometric.
\end{theorem}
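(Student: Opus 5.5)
This is Theorem 4 of Adiprasito and Funar, which the paper cites and does not prove, so the plan is to reconstruct their argument in outline rather than derive it from earlier results here. The guiding idea is to build a CAT(0) cubical model of $\mathcal C$ out of a CAT(0) cube complex that is known to be contractible. Then we use the collapsing sequence to cut it down to something PL-homeomorphic to $\mathcal C$.

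\textbf{Step 1: pass to a cubical complex.} Let $d=\dim\mathcal C$. Every simplex of dimension at most $d$ is PL-homeomorphic to a cube, for example by cubulating the barycentric subdivision: the cube complex whose cubes are indexed by chains $\tau\subseteq\sigma$. This gives a cubical complex $Q$ PL-homeomorphic to $|\mathcal C|$. The difficulty is that $Q$ is usually not nonpositively curved, because Gromov's link condition (every vertex link a flag simplicial complex) fails. So I will not cubulate $\mathcal C$ directly.

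\textbf{Step 2: embed the collapsing sequence into a nice ambient space.} Instead, I would work inside a large cube $[0,N]^m$ with its standard unit cubulation, which is CAT(0). I would first use the collapse sequence to realize $|\mathcal C|$ as a subcomplex $K$ of that grid, up to PL-homeomorphism. The natural route is to reverse the collapses: each inverse collapse (elementary anticollapse) glues a new simplex along a free face.
\begin{enumerate}
\item Start from a single vertex of the grid.
\item Build the complex backwards through the anticollapses.
\item At each step, attach a thickened cubical ball along a cubical ball contained in the current complex's boundary.
\end{enumerate}
The point to maintain at every stage is that the subcomplex is \emph{locally convex}, or at least has flag vertex links. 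After enough subdivision, for instance replacing unit cubes by $2^k$ subcubes, each attachment can be realized so that every vertex link stays flag. By Gromov's criterion~\cite{gromov1987hyperbolic}, the result is then a locally CAT(0) cube complex. It is also simply connected, since it is contractible, being built by anticollapses from a point. The Cartan--Hadamard theorem then makes it CAT(0).

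\textbf{Main obstacle.} The hard step is controlling the links during a single anticollapse. Gluing a $k$-cube along a codimension-one face, or along a lower-dimensional free face extended to a ball, can create non-flag links at vertices on the gluing locus. Empty triangles appear exactly where two previously non-adjacent directions both become adjacent to the new cube.

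My first attempt would be to thicken each glued cell into a ``collar'' of cubes so that every new vertex link is a join or cone of flag complexes. Cones and joins of flag complexes are flag, and a cone point in the link kills any empty simplex. The remaining point is that the collar must not change the PL type. For that I would use that attaching a ball along a ball in its boundary is a PL-homeomorphism onto the previous complex union a cell, which is exactly the elementary anticollapse.

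Finally, the dimension $d$ is finite, so the process terminates after finitely many steps. It yields the required CAT(0) cubical complex PL-homeomorphic to $|\mathcal C|$ with the piecewise flat equilateral metric.
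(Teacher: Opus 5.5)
Your skeleton matches how the paper describes the Adiprasito--Funar proof. You reverse the collapses to get $\{*\}=\mathcal C_0\subset\cdots\subset\mathcal C_N=\mathcal C$ and carry a CAT(0) cube complex $\mathcal D_n$ PL-homeomorphic to $\mathcal C_n$. At each step you glue a cubical ball $\Delta$ along the cubical ball $\Gamma\subset\mathcal D_n$ corresponding to $\gamma=\delta\cap\mathcal C_n$, and you finish with Gromov's link condition, contractibility and Cartan--Hadamard. However, the real content of the theorem is the existence of a suitable $\Delta$: one with $(\Delta,\Gamma)$ PL-homeomorphic to $(\delta,\gamma)$ and $\mathcal D_n\cup_\Gamma\Delta$ nonpositively curved. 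This is \cite[Proposition 10]{AdiprasitoF21}, which the paper invokes; there $\Delta$ is obtained by a series of modifications of $\Gamma\times[0,1]$. You correctly flag this as the main obstacle, but both mechanisms you propose for it fail, so the proof has a genuine gap at exactly this step.

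\textbf{Why the proposed mechanisms fail.} First, subdividing each cube into $2^k$ subcubes gives a rescaled copy of the same cube complex. Links at old vertices are unchanged, and links at new vertices are joins of an octahedral sphere with links of old cubes. So subdivision can neither create nor repair a non-flag link. Second, the product collar fails. Gluing $\Gamma\times[0,1]$ along $\Gamma\times\{0\}$ to $X=\mathcal D_n$ gives, at $v\in\Gamma$, the link $\mathrm{lk}_X(v)\cup_L(L\ast e)$ with $L=\mathrm{lk}_\Gamma(v)$. This is neither a cone nor a join. It is flag if and only if $L$ is a full subcomplex of $\mathrm{lk}_X(v)$, i.e.\ iff $\Gamma$ is locally convex at $v$. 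The cone point creates empty simplices rather than killing them. For example, in $X=[0,2]^2$ take $\Gamma$ an edge path turning at the centre $v$ with directions $-x,-y$. These are adjacent in the $4$-cycle $\mathrm{lk}_X(v)$ but not in $L$, so $\{-x,-y,e\}$ is an empty triangle. Thus the obstruction comes from directions adjacent in $X$ but not in $\Gamma$. Conversely, gluing along a single cube face is always harmless, since a simplex is automatically full, so your diagnosis is slightly off.

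\textbf{What is actually needed.} The attaching ball (the image of $\partial\delta$ minus the open star of the free face) has no reason to be locally convex in $\mathcal D_n$. So $\Delta$ must be built so that $\mathrm{lk}_X(v)\cup_L\mathrm{lk}_\Delta(v)$ stays flag at every $v\in\Gamma$. For instance, vertices of $L$ spanning a simplex of $\mathrm{lk}_X(v)$ not in $L$ must have no common neighbour in $\mathrm{lk}_\Delta(v)\setminus L$. In the example, replacing the corner square of the strip by two squares makes $-x$ and $-y$ joined by a path of length $3$ in $\mathrm{lk}_\Delta(v)$. This must be done while keeping $(\Delta,\Gamma)\cong(\delta,\gamma)$ compatibly with the existing subdivisions. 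Doing this in every dimension is the missing idea.

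\textbf{Minor points.} Local convexity inside the grid $[0,N]^m$ cannot be maintained in general. A connected locally convex subcomplex of a CAT(0) space is convex, and convex subcomplexes of the Euclidean grid are boxes, so the ambient grid does no work. Also, the process terminates because the collapse sequence is finite, not because $d$ is finite.
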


Explicitly, there exists a cubical complex $\mathcal D$ that is CAT(0) when endowed with the piecewise flat equilateral pseudometric, and such that $|\mathcal C|$ is piecewise linear-homeomorphic to $\mathcal D$. That is, there are simplicial subdivisions $\mathcal K$ and $\mathcal L$ of $|\mathcal C|$ and $\mathcal D$, respectively, along with an homeomorphism $\mathcal K \simeq \mathcal L$ such that each simplex of $\mathcal K$ is mapped linearly to a simplex of $\mathcal L$. 

\begin{figure}[t]
    \centering
    \begin{tikzpicture}[scale=.7]

\coordinate (A) at (0,0);       % front-bottom-left
\coordinate (B) at (8,0);       % front-bottom-left

\draw[thick, dotted] (0,0) -- (.5,0);
\draw[thick] (.5,0) -- (B);

\filldraw[black] (A) circle (2pt);
\filldraw[black] (B) circle (2pt);
\fill[white,draw=black, thick] (4,0) circle (2pt);
\fill[white,draw=black, thick] (2,0) circle (2pt);
\fill[white,draw=black, thick] (1,0) circle (2pt);
\fill[white,draw=black, thick] (.5,0) circle (2pt);

\node at (6,  .4) (a) {$1$};
\node at (3,  .4) (b) {$1$};
\node at (1.5,.4) (c) {$1$};
\node at (.75,.4) (d) {$1$};

\end{tikzpicture}
    \caption{An infinite subdivision of the unit edge in $\mathbb R$. The pullback metric from the proper cubical complex assigns length $1$ to each subdivided edge, making the subdivided edge unbounded. }
    \label{fig:infinite}
\end{figure}

It follows that the pullback pseudometric\footnote{For a map \(f\colon X\to Y\) and a pseudometric \(d_Y\) on \(Y\), the pullback pseudometric on \(X\) is
$
d_X(x,x') = d_Y(f(x), f(x')).
$} endows $|\mathcal C|$ with a CAT(0) pseudometric, which agrees with the usual topology on $|\mathcal C|$. However, for $|\mathcal C|$ to be CUB, this pseudometric must be a metric. This holds if and only if the subdivision $\mathcal{K}$ of $|\mathcal C|$ is finite, see for example Figure~\ref{fig:infinite}. 
\ifextended
Fortunately, we can check that the proof of Theorem~\ref{th:adiprasito} produces a finite subdivision $\mathcal K$.

\begin{proposition}
    The subdivision $\mathcal K$ produced by Theorem~\ref{th:adiprasito} is a finite simplicial complex.
\end{proposition}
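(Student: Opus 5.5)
The plan is to open up the proof of Theorem~\ref{th:adiprasito} in~\cite{AdiprasitoF21} and check that every construction step is a finite combinatorial operation on finite complexes. The proof there proceeds, as I recall it, by induction along a collapsing sequence of $\mathcal C$: collapsibility gives a finite sequence of elementary collapses $\mathcal C = \mathcal C_0 \searrow \mathcal C_1 \searrow \cdots \searrow \mathcal C_r = \{*\}$, and the CAT(0) cubical complex is built by reversing this sequence. One starts from a point and performs the inverse of each collapse (an elementary anticollapse / expansion), maintaining at each stage a CAT(0) cubical complex $\mathcal D_j$ PL-homeomorphic to $\mathcal C_j$. Since $\mathcal C$ is finite, the sequence has finite length $r$, bounded by the number of simplices of $\mathcal C$. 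So it suffices to show that each inductive step replaces a finite subdivision and a finite cubical complex by finite ones.

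The key steps, in order, are as follows. (1) Recall the base case: a point is a finite cubical complex, and the identity subdivision is finite. (2) For the inductive step, the expansion adds a simplex $\sigma$ together with its free face $\tau$, glued along $\partial\sigma \setminus \operatorname{int}\tau$. In the construction this is realized on the cubical side by attaching finitely many cubes, namely a cubical subdivision of $|\sigma|$ (for instance the standard decomposition of a $d$-simplex into $d+1$ cubes around its barycenter) together with possibly a cubical ``collar'' glued along the already-subdivided part of $\partial\sigma$. (3) I would check that the part of $\partial\sigma$ along which we glue sits in $\mathcal D_j$ as a finite cubical subcomplex, by the induction hypothesis. Hence any common refinement needed to make the gluing cellular is a finite subdivision: a common refinement of two finite polyhedral subdivisions of a compact polyhedron can be taken finite, e.g.\ by intersecting cells and triangulating without new vertices beyond finitely many. (4) Conclude that $\mathcal D = \mathcal D_r$ is a finite cubical complex. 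The PL-homeomorphism is then given by finite simplicial subdivisions $\mathcal K$ of $|\mathcal C|$ and $\mathcal L$ of $\mathcal D$: triangulating each of the finitely many cubes of $\mathcal D$ and each of the finitely many cells of the intermediate refinements yields finitely many simplices, and $\mathcal K$ is their preimage under the simplexwise-linear homeomorphism.

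An alternative, cleaner argument for the final step avoids retracing every detail. A PL homeomorphism between \emph{compact} polyhedra always admits finite triangulations on both sides. $|\mathcal C|$ is compact since $\mathcal C$ is finite, so once $\mathcal D$ is shown to be a finite cubical complex, $\mathcal D$ is compact and we may choose $\mathcal K,\mathcal L$ finite. Indeed, any simplicial subdivision of a compact polyhedron into geometric simplices that is locally finite is finite, by compactness. So the real content reduces to finiteness of $\mathcal D$.

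I expect the main obstacle to be step (2)–(3): verifying that the Adiprasito--Funar construction never uses an infinite or ``convergent'' process. One concern is repeated subdivisions needed to keep the link condition (Gromov's flagness criterion) satisfied at each attachment. If the proof refines the existing cubical structure (e.g.\ cubical barycentric subdivision) at each stage, each refinement is still finite, and there are only $r$ stages. I would therefore argue by bounding the number of cubes after stage $j$ by a function of the number of cubes after stage $j-1$ and $\dim\mathcal C$, which suffices.
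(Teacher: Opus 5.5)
Your proposal is correct and takes the same route as the paper: induct along the reversed collapsing sequence $\{*\}=\mathcal C_0\subset\cdots\subset\mathcal C_N=\mathcal C$, carry the subdivisions $\mathcal K_n,\mathcal L_n$ along, and check that each anticollapse adds only finitely many cubes and simplices. The one difference is that the paper pins down the attaching step as \cite[Proposition~10]{AdiprasitoF21}, which builds the new cubical disk from $\Gamma\times[0,1]$ by a series of modifications ($\Gamma$ being the cubical image of $\gamma=\delta\cap\mathcal C_n$); this plays the role of your ``collar'', and the paper uses it instead of a barycentric cubulation of the simplex.
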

\begin{proof}
    
In this proof, all the complexes are understood as \emph{geometric}.
The proof of Theorem~\ref{th:adiprasito} writes $\mathcal{C}$ as a finite increasing union of simplicial subcomplexes:
\[
\mathcal C_0 \subset \mathcal C_1 \subset \cdots \subset C_N = \mathcal{C},
\qquad \mathcal C_0 = \{*\},
\]
where each $\mathcal C_{n+1}$ is obtained from $\mathcal C_n$ by attaching a $k$-simplex $\delta$ along $\gamma = \delta \cap C_n$.
The construction builds, by induction, a CAT(0) cubical subdivision $\mathcal D_n$ of $\mathcal C_n$, such that $\mathcal D_n$ is PL-homeomorphic to $\mathcal C_n$. Let us show by induction that this PL-homeomorphism is done through finite simplicial subdivisions $\mathcal K_n$ and $\mathcal L_n$ of $\mathcal C_n$ and $\mathcal D_n$, respectively. The result holds for $C_0 = \{*\}$. For the inductive step, the simplicial subcomplex $\gamma$ is sent by the PL-homeomorphism to a cubical $k$-disk $\Gamma$ in $\mathcal K_n$, and by~\cite[Proposition 10]{AdiprasitoF21}, there is a cubical $(k+1)$-disk $\Delta$ with $\Gamma$ in its boundary, that is PL-homeomorphic to $\delta$, such that the subdivisions associated to the PL-homeomorphism agree with $\mathcal K_n$ and $\mathcal L_n$. $\mathcal K_{n+1}$ and $\mathcal L_{n+1}$ are therefore obtained by gluing these compatible subdivisions of $\delta$ and $\Delta$. Thus, we have to show that $\Delta$ and the subdivisions associated to the PL-homeomorphism $\Delta \simeq \delta$ are finite. However, this is immediate as~\cite[Proposition 10]{AdiprasitoF21} build $\Delta$ by applying a series of modification to $\Gamma \times [0,1]$, which preserves our finiteness hypothesis. 
\end{proof}

\else
Fortunately, we check in~\cite{albert2026solving} 
that the proof of Theorem~\ref{th:adiprasito} produces a finite (as a complex) subdivision $\mathcal K$.
\fi
Therefore, $|\mathcal C|$ is CUB, and we can consider approximate agreement on the CUB space $|\mathcal C| \simeq \mathcal K$. This CUB structure has a few consequences on the simplices of $\mathcal K$:
\begin{itemize}
    \item each simplex $\kappa$ of $\mathcal K$ is $\sigma$-convex when regarded as a subspace of $|\mathcal C|$;
    \item the induced metric on a simplex $\kappa$ of $\mathcal K$ is the pullback through $\mathcal K \to \mathcal L$ of the usual Euclidean metric on a geometric simplex of $\mathcal L$;
    \item the shortest path in $\mathcal K$ between two points in a simplex $\kappa$ of $\mathcal K$ lies in $\kappa$.
\end{itemize}
On simplices of $|\mathcal C|$, however, none of these properties necessarily hold, see for example Figure~\ref{fig:CUB}.

\begin{figure}[t]
    \centering
        \begin{tikzpicture}[scale = .65]
\fill[blue!30, draw=black, thick] (-1,0) -- (3,0) -- (1,2) -- cycle;
\draw[dashed, thick] (1,2) -- (1,0);

\filldraw[black] (-1,0) circle (2pt);
\filldraw[black] (3,0) circle (2pt);
\filldraw[black] (1,2) circle (2pt);
\fill[blue!30, draw=black, thick] (1,0) circle (2pt);

\draw (0,0) node[cross,thick, red] {};
\draw (2,0) node[cross,thick, rotate=0, red] {};

\draw[dotted, thick, red] (0,0) -- (1,.37) -- (2,0);

\end{tikzpicture}
     ~ \Large $\simeq$ ~
        \begin{tikzpicture}[scale=.65]
\def\x{1.9}

\fill[blue!30, draw=black, thick] (0,0) -- (\x,0) -- (\x,\x) -- (0,\x) -- cycle;
\draw[dashed, thick] (0,0) -- (\x,\x);

\filldraw[black] (0,0) circle (2pt);
\filldraw[black] (\x,0) circle (2pt);
\filldraw[black] (0,\x) circle (2pt);
\filldraw[black] (\x,\x) circle (2pt);

\draw (\x*0.5,0) node[cross,thick, rotate=0, red] {};
\draw (0, \x*0.5) node[cross,thick, rotate=0, red] {};

\draw[dotted, thick, red]  (\x*0.5,0) -- (0, \x*0.5);

\end{tikzpicture}
    ~ ~ 
    \caption{A geometric simplicial complex $\mathcal C$ that is PL-homeomorphic to a CAT(0) cubical complex $\mathcal D$. The subdivision induced by the PL-homeomorphism is shown with dashed edges and hollow vertices. Two points are depicted, along with the shortest path between them in the piecewise flat equilateral metric of $\mathcal D$ and in the pullback metric of $\mathcal C$. In particular, the bottom edge of $\mathcal C$ is not  $\sigma$-convex with this metric.}
    \label{fig:CUB}
\end{figure}

\subsection{Recovering vertices from approximate agreement on \texorpdfstring{$|\mathcal C|$}{C}}\label{section:recovering}

Let us now reduce approximate agreement on the CUB space $\mathcal K$ to simplex agreement on $\mathcal K$. Suppose given a fixed number $n$ of processes with input vertices $(x_i)_{i \in [n]}$ of $\mathcal K$. For all $\epsilon > 0$, we can apply the protocol for approximate agreement on $\mathcal K$ to obtain output points $(y_i)_{i \in [n]} $ that are $\epsilon$-close to each other. However, there is no guarantee that these points are vertices of $\mathcal K$. 
In the following, we show that for small enough values of $\epsilon$, each process can compute an output vertex of $\mathcal K$, without additional communication, and such that validity and agreement are satisfied.
We first expose some general geometry results.

%\begin{definition}
    The distance $d(A,B)$ between two subsets $A$ and $B$ of a metric space is the infimum $\inf\{d(a,b) \mid a \in A, b \in B\}$. If $A$ and $B$ are nonempty and compact, this infimum is attained. In $\mathcal K$, any simplices $\kappa, \tau$ are compact; hence $\kappa \cap \tau \neq \varnothing$ if and only if $d(\kappa, \tau) = 0$.
Similarly, we define the distance $d(x,A)$ between a point $x$ and a subset $A$ of a metric space as $\inf\{d(x,a) \mid a \in A\}$. When this infimum is attained by a point (i.e. for a compact $A$), we call this point a \textbf{projection} of $x$ on $A$. When this point is unique, for example when projecting a point of a geometric simplex into a face, we denote it by $p_A(x)$. 
%\end{definition}

Since $\mathcal K$ is finite, the minimum distance $d_{min}$ between two non-intersecting simplices of $\mathcal K$ is well defined. \ifextended\else We prove the following geometry lemmas in~\cite{albert2026solving}. \fi Intuitively, they show that if a point $x$ in $\mathcal K$ is close enough to two simplices $\kappa$ and $\tau$, then they intersect and $x$ is close to their intersection, in a uniform way.

\begin{lemma}\label{lemma:corner1}
    Let $\Delta$ be a simplex in $\mathcal K$, and let $\kappa$ and $\tau$ be sub-simplices of $\Delta$. There exists $C_\Delta(\kappa, \tau) \geq 1$ such that for all $\delta > 0$ and points $x$ in $\Delta$, if $\delta < d_{min}/2$ and $x$ is $\delta$-close to $\kappa$ and $\tau$, then $\kappa \cap \tau \neq \emptyset$ and $x$ is $C_\Delta(\kappa, \tau)\delta$-close to $\kappa \cap \tau$.
\end{lemma}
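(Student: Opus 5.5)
Since $\Delta$ is a simplex of $\mathcal K$, the induced metric on it is the pullback of the Euclidean metric on a geometric simplex of $\mathcal L$, by the properties listed after Theorem~\ref{th:adiprasito}. Shortest paths between points of $\Delta$ stay in $\Delta$, so distances between points of $\Delta$ are Euclidean distances in the affine realization of $\Delta$. The plan is therefore to reduce the statement to elementary Euclidean geometry on one geometric simplex, with $\kappa,\tau$ faces of it.

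\emph{Step 1: the simplices intersect.} Let $x$ be $\delta$-close to $\kappa$ and to $\tau$. Then $d(\kappa,\tau) \le d(\kappa,x)+d(x,\tau) < 2\delta < d_{min}$. Since $\kappa,\tau$ are simplices of $\mathcal K$ and $d_{min}$ is the minimum distance between non-intersecting simplices, $\kappa\cap\tau\neq\emptyset$. Because they are faces of the same simplex $\Delta$, $\kappa\cap\tau$ is the face spanned by their common vertices.

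\emph{Step 2: a uniform constant.} I would use a compactness and homogeneity argument. Define on $\Delta$ the continuous functions $g(x)=\max(d(x,\kappa),d(x,\tau))$ and $h(x)=d(x,\kappa\cap\tau)$. Both vanish exactly on $\kappa\cap\tau$: $g(x)=0$ means $x\in\kappa\cap\tau$, since faces are closed. The goal is a constant $C$ with $h\le C g$ on $\Delta$. This is a Hoffman-type error bound for the polyhedra $\kappa,\tau$ in Euclidean space.

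To prove it, I would use barycentric coordinates $(\lambda_v)_v$ on $\Delta$. In the Euclidean realization, $d(x,\kappa)$ is comparable, up to constants depending only on $\Delta$, to $\sum_{v\notin\kappa}\lambda_v(x)$. The reason is that the latter is the value of a linear functional that vanishes on $\kappa$, and on the compact set $\Delta$ the two quantities are related by norm equivalence on the finite-dimensional space of the non-$\kappa$ coordinates. The same comparison holds for $\tau$ and for $\kappa\cap\tau$. Then
$$\sum_{v\notin\kappa\cap\tau}\lambda_v \le \sum_{v\notin\kappa}\lambda_v+\sum_{v\notin\tau}\lambda_v,$$
which gives $h(x)\le C_\Delta(\kappa,\tau)\,(d(x,\kappa)+d(x,\tau)) \le 2C'\delta$. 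Finally I would take $C_\Delta(\kappa,\tau)=\max(1,2C')$ to ensure the constant is at least $1$.

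\emph{Main obstacle.} The delicate point is the two-sided comparison between $d(x,F)$ and $\sum_{v\notin F}\lambda_v(x)$ for a face $F$. I would prove it directly. In one direction, moving $x$ to the point obtained by deleting the non-$F$ coordinates and renormalizing costs at most a constant (the diameter of $\Delta$ times a factor) times $\sum_{v\notin F}\lambda_v$. In the other direction, $\sum_{v\notin F}\lambda_v$ is an affine function vanishing on $F$, hence Lipschitz with some constant $L$, so it is at most $L\,d(x,F)$.

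One last check: the projection onto $\kappa$ realizing $d(x,\kappa)$ in the metric of $\mathcal K$ could a priori leave $\Delta$. This is harmless, because $\kappa\subseteq\Delta$ and the metric on $\Delta$ coincides with the ambient one, so all the distances above are measured consistently.
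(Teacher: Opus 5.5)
Your proof is correct, but it takes a different route from the paper. Step 1 is identical to the paper's. For the quantitative part, the paper runs the triangle inequality through $p_\kappa x$ and $p_\tau p_\kappa x$ to get
$d(x,p_{\kappa\cap\tau}x)\le 3\delta + d(p_\tau p_\kappa x,p_{\kappa\cap\tau}x)$.
It then cites a best-approximation result: an alternating-projection bound with a cosine $c(\kappa,\tau)<1$ of the angle between the faces. This absorbs the last term and gives $C_\Delta(\kappa,\tau)=3/(1-c(\kappa,\tau))$, after treating $\kappa\subseteq\tau$ (or the reverse) separately.

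You instead prove a Hoffman-type error bound directly from barycentric coordinates, in three pieces:
\begin{itemize}
\item $d(x,F)\le \operatorname{diam}(\Delta)\sum_{v\notin F}\lambda_v(x)$, by renormalizing the coordinates on $F$;
\item $\sum_{v\notin F}\lambda_v\le L\,d(x,F)$, because affine functions are Lipschitz;
\item $\sum_{v\notin\kappa\cap\tau}\lambda_v\le\sum_{v\notin\kappa}\lambda_v+\sum_{v\notin\tau}\lambda_v$, which uses exactly that $\kappa\cap\tau$ is the face spanned by the common vertices and that $\lambda_v\ge 0$.
\end{itemize}

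What your approach buys: it is self-contained and needs no case distinction. It also sidesteps a point the paper leaves implicit. The cited angle bound is formulated for projections onto (affine) subspaces, whereas $p_\kappa$ and $p_\tau$ are metric projections onto compact faces, so applying it as stated needs an extra argument.

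What the paper's approach buys: it expresses the constant through a geometric angle between $\kappa$ and $\tau$, rather than through $\operatorname{diam}(\Delta)$ and a Lipschitz constant of the barycentric coordinates.

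Two small points. In your displayed chain the first constant should be $C'=\operatorname{diam}(\Delta)L$, not $C_\Delta(\kappa,\tau)$, which you only define afterwards as $\max(1,2C')$. Also, the compactness and homogeneity framing at the start of Step 2 is never actually used; the barycentric estimates do all the work.
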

\ifextended
\begin{proof}
$d(\kappa, \tau) \leq d(\kappa, x) + d(x, \tau) < d_{min}$ so $\kappa$ intersects $\tau$.
    If $\kappa$ is contained in $\tau$, or the reverse, $C_\Delta(\kappa, \tau) = 1$ suffices. Otherwise, we have $d(x, \kappa \cap \tau) =  d(x,p_{\kappa\cap \tau}x)$, and the inequality:
    \begin{align*}
         d(x,p_{\kappa\cap \tau}x) &\leq d(x, p_{\kappa}x) + d(p_\kappa x, p_\tau p_\kappa x) + d(p_\tau p_\kappa x, p_{\kappa\cap \tau}x)\\
          &= d(x, \kappa) + d(p_\kappa x, \tau) + d(p_\tau p_\kappa x, p_{\kappa\cap \tau}x)\\
          &\leq d(x, \kappa) + d(p_\kappa x, x) + d(x,\tau) + d(p_\tau p_\kappa x, p_{\kappa\cap \tau}x)\\
          &= 2d(x, \kappa) + d(x, \tau) + d(p_\tau p_\kappa x, p_{\kappa\cap \tau}x)
        %&\leq 3\delta + d(p_\tau p_\kappa x, p_{\kappa\cap \tau}x)
        % &\leq d(x,\kappa) + d(p_\kappa x,\tau) + || p_\tau p_\kappa x -p_{\kappa\cap \tau}x||\\
        % &\leq d(x,\kappa) + d(p_\kappa x,\tau) + || p_\tau p_\kappa x -p_{\kappa\cap \tau}x||\\
    \end{align*}
        By classical results of best approximation theory, see for example~\cite[Theorem 9.33]{deutsch2001best}, there exists a constant $ 0 \leq c(\tau,\kappa) \leq 1$ depending only on the angle between $\kappa$ and $\tau$ such that: 
        $$d(p_\tau p_\kappa x, p_{\kappa\cap \tau}x) \leq c(\kappa, \tau) d(x, p_{\kappa \cap \tau} x).$$
        In addition, $c(\tau,\kappa) = 1$ if and only $\tau$ is contained in $\kappa$, or the reverse.
        In our case, $c(\tau,\kappa) < 1$ and we have:
        \begin{align*}
            d(x,p_{\kappa\cap \tau}x) &\leq 2d(x, \kappa) + d(x, \tau) + d(p_\tau p_\kappa x, p_{\kappa\cap \tau}x)\\
            &\leq 3\delta + c(\kappa, \tau) d(x, p_{\kappa \cap \tau} x),
        \end{align*}
        hence $d(x, p_{\kappa \cap \tau} x) \leq \frac{3 \delta}{1 -c(\kappa, \tau)}$ and we thus set $C_\Delta(\kappa, \tau) = \frac{3}{1 -c(\kappa, \tau)}$. \qed
\end{proof}
\fi

Since $\mathcal K$ is finite, we can consider $C_m = \max_{C \in \mathcal{K},\, \kappa,\tau \subseteq C} C_\Delta(\kappa, \tau)$.

\begin{lemma}\label{lemma:corner2}
    There exists $C \geq 1$ such that for all simplices $\kappa$ and $\tau$ in $\mathcal K$, $d_{min} / 2C_m > \delta > 0$ and point $x$ in $\mathcal K$, if $x$ is $ \delta$-close to $\kappa$ and $\tau$, then $\kappa \cap \tau \neq \emptyset$ and $x$ is $C\delta$-close to $\kappa \cap \tau$.
\end{lemma}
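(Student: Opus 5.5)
The plan is to reduce Lemma~\ref{lemma:corner2} to Lemma~\ref{lemma:corner1} by localizing everything in a single simplex of $\mathcal K$ containing $x$. Fix $\kappa,\tau$ in $\mathcal K$ and a point $x$ with $d(x,\kappa)\le\delta$ and $d(x,\tau)\le\delta$, where $\delta<d_{min}/2C_m$. Since $C_m\ge 1$, we have $\delta<d_{min}/2$. By the triangle inequality, $d(\kappa,\tau)\le 2\delta<d_{min}$, so $\kappa\cap\tau\neq\emptyset$ exactly as in Lemma~\ref{lemma:corner1}. The difficulty is that $x$ need not lie in a simplex containing both $\kappa$ and $\tau$, so Lemma~\ref{lemma:corner1} does not apply directly.

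Let $\Delta$ be the carrier of $x$, the smallest simplex of $\mathcal K$ containing $x$. Take the projections $a=p_\kappa(x)$ and $b=p_\tau(x)$, which are attained by compactness, and let $\kappa'$ and $\tau'$ be their carriers, which are faces of $\kappa$ and $\tau$. The first key step is to show that, for $\delta$ small, a point at distance $\le\delta$ from $x$ has its carrier sharing a face with $\Delta$. Concretely, $d(\Delta,\kappa')\le\delta<d_{min}$, so $\Delta\cap\kappa'\neq\emptyset$. I then want to replace $\kappa$ by the face $\kappa\cap\Delta$, or more precisely by the face of $\Delta$ spanned by the common vertices. I would control $d(x,\kappa\cap\Delta)$ by applying Lemma~\ref{lemma:corner1} inside a simplex $\Delta'$ containing both $\Delta$-faces involved, or by using that the shortest path from $x$ to $a$ lies in the star of $\Delta\cap\kappa'$. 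The target is
\[
d(x,\kappa\cap\Delta)\le C_m\delta \quad\text{and}\quad d(x,\tau\cap\Delta)\le C_m\delta .
\]

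With this bound, both $\kappa\cap\Delta$ and $\tau\cap\Delta$ are sub-simplices of $\Delta$ and $x\in\Delta$ is $C_m\delta$-close to each. The hypothesis $C_m\delta<d_{min}/2$ is precisely what Lemma~\ref{lemma:corner1} needs at scale $C_m\delta$; this explains the constant $d_{min}/2C_m$ in the statement. Lemma~\ref{lemma:corner1} then gives $d(x,\kappa\cap\tau\cap\Delta)\le C_m\cdot C_m\delta$. Since $\kappa\cap\tau\cap\Delta\subseteq\kappa\cap\tau$, we can take $C=C_m^2$, possibly enlarged by a factor accounting for the first step.

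The main obstacle is the first step: bounding $d(x,\kappa\cap\Delta)$ linearly in $d(x,\kappa)$ when the nearest point of $\kappa$ lies outside $\Delta$. My first attempt would use finiteness of $\mathcal K$. For each pair (simplex $\Delta$, simplex $\kappa$) with $\Delta\cap\kappa\neq\emptyset$, the function $x\mapsto d(x,\Delta\cap\kappa)/d(x,\kappa)$ on $\Delta\setminus(\Delta\cap\kappa)$ is bounded near $\Delta\cap\kappa$. This holds by the same local angle argument as in Lemma~\ref{lemma:corner1}, applied in the star of $\Delta\cap\kappa$, where the metric is piecewise Euclidean (the metric pulled back from $\mathcal L$). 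Taking the maximum over the finitely many pairs gives a uniform constant $C'$, and then $C=C_m C'$. If that local argument proves delicate, the fallback is a compactness/contradiction argument: assume a sequence $x_k$ with ratio tending to infinity, extract a convergent subsequence in the compact set $\Delta$, and derive a contradiction with the piecewise linear structure near the limit point.
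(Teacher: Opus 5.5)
Your plan is, step for step, the paper's proof. The paper also passes to the carrier $\Delta$ of $x$, claims $d(x,\Delta\cap\kappa)\le C_m\delta$ and $d(x,\Delta\cap\tau)\le C_m\delta$, and then applies Lemma~\ref{lemma:corner1} inside $\Delta$ at scale $C_m\delta<d_{min}/2$ to get $C=C_m^2$. The paper disposes of the first bound in one line, as if it were Lemma~\ref{lemma:corner1}. You are right that it is not: that lemma needs $\kappa$ and $\Delta$ to be faces of a single simplex, and in general they only share a face (for instance, two triangles meeting at a vertex and linked through a third triangle of the star). So you have found the real crux, but your proposal leaves it open, and this is a genuine gap. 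A common simplex $\Delta'$ need not exist. The angle argument of Lemma~\ref{lemma:corner1} does not transfer to the star either. It uses two affine faces of one Euclidean simplex meeting at a fixed angle, whereas in the star the geodesic from $x$ to $\kappa$ may bend through several simplices (even if it stays in the star), so there is no single angle. The compactness fallback correctly localizes the problem: away from $\Delta\cap\kappa$ the ratio is bounded, since $\{x\in\Delta : d(x,\Delta\cap\kappa)\ge r\}$ is compact and disjoint from $\kappa$. But the ``contradiction with the piecewise linear structure'' at the limit point $p\in\Delta\cap\kappa$ is the whole content of the step.

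One way to supply the missing step uses conical neighbourhoods. $\mathcal K$ carries the pulled-back length metric of the finite piecewise Euclidean complex $\mathcal L$. Hence each $p$ has a ball isometric to a ball about the apex of the Euclidean cone over its geometric link $\mathrm{Lk}(p)$. In that ball, $\Delta$, $\kappa$ and $\Delta\cap\kappa$ become cones over subcomplexes $A$, $B$, $A\cap B$ of $\mathrm{Lk}(p)$. Since $d\bigl((r,u),\mathrm{cone}(S)\bigr)=r\sin\min\{d_{\mathrm{Lk}}(u,S),\pi/2\}$, the ratio $d(x,\Delta\cap\kappa)/d(x,\kappa)$ is scale invariant near $p$. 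Bounding it therefore reduces to the analogous error bound for $A$, $B$ in the lower-dimensional spherical complex $\mathrm{Lk}(p)$; this is pure compactness if $A\cap B=\emptyset$. Induction on dimension, the finitely many link types, and compactness of $\Delta\cap\kappa$ then give a uniform $C'$.

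There is also a threshold mismatch in your last paragraph. If $C'>C_m$, the hypothesis $\delta<d_{min}/2C_m$ does not give $C'\delta<d_{min}/2$, so Lemma~\ref{lemma:corner1} at scale $C'\delta$ no longer yields $\kappa\cap\tau\cap\Delta\neq\emptyset$. This is easily patched. For $d_{min}/2C'\le\delta<d_{min}/2C_m$ you already have $\kappa\cap\tau\neq\emptyset$ from $2\delta<d_{min}$, and $d(x,\kappa\cap\tau)\le\operatorname{diam}\mathcal K\le(2C'\operatorname{diam}\mathcal K/d_{min})\,\delta$. So $C=\max\{C_mC',\,2C'\operatorname{diam}\mathcal K/d_{min}\}$ works.
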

\ifextended
\begin{proof}
If $x$ is a vertex of $\mathcal K$, then $d(\{x\},\kappa) = d(x,\kappa) < d_{min}$ so $\{x\}$ intersects $\kappa$, i.e. $x \in \kappa$. Symmetrically, $x \in \tau$ and the result holds trivially.
Otherwise, $x$ lies in the interior of some unique simplex $\Delta$, hence $d(x, \Delta)=0$. Now, $d(\Delta, \kappa) \leq d(x,\kappa) < d_{min}$ so $\Delta \cap \kappa \neq \emptyset$ and $d(x, \Delta \cap \kappa) \leq C_m\delta < d_{min} / 2$. Symmetrically, $d(x, \Delta \cap \tau) \leq C_m\delta  < d_{min} /2$. Therefore, $\Delta \cap \kappa$ and $\Delta \cap \tau$ intersects. In particular, $\kappa \cap \tau \neq \emptyset$. Moreover, we have: $$d(x, \kappa \cap \tau) \leq d(x, (\Delta \cap \kappa) \cap (\Delta \cap \tau)) \leq C_m^2\delta.$$ Therefore, having $C = C_m^2$ suffices. \qed
\end{proof}
\fi

Let us now show that for all values $ \frac{d_{min}}{2C_m(3C)^d}> \epsilon > 0$ and $n$ processes with input points $(y_i)_{i \in [n]}$ in $\mathcal K$ that are $\epsilon$-close to each other, each process $i$ can compute a vertex $z_i$ in $\mathcal K$ without communication, such that the output vertices $(z_i)_{i \in [n]}$ span a simplex of $\mathcal K$. Consider the following $d$-round protocol: 

\begin{protocol}[Face projection]\label{def:protocol}
Each process with input $x$ computes:\\
%\begin{minted}[escapeinside=||,mathescape]{py}
for $i$ in $[d, d-1, \ldots, 1]$:\\
\hspace*{2em} if $x$ is in the interior of a simplex $\kappa$ of dimension $i$:\\
        \hspace*{4em}$\tau \leftarrow $ a face of $\kappa$ closest to $x$\\
        \hspace*{4em}$x \leftarrow p_{\tau}x $\\
return $x$
%\end{minted}
\end{protocol}

Protocol~\ref{def:protocol} returns vertices of $\mathcal K$: for each process with input $x$, at the beginning of each round, $x$ lies in the interior of a simplex of dimension at most $i$, hence is necessarily in the interior of an edge at the beginning of some iteration, and then projects into a vertex of $\mathcal K$. \ifextended We now prove its correctness: \else We prove its correctness in~\cite{albert2026solving}: \fi

\begin{proposition}\label{prop:common_simplex}
If the input points $(y_i)_{i \in [n]}$ in $\mathcal K$ are $ \frac{d_{min}}{2C_m(3C)^d} > \epsilon$-close to each other, then the output vertices $(z_i)_{i \in [n]}$ of Protocol~\ref{def:protocol} lie in the same simplex.% of $\mathcal K$. 
\end{proposition}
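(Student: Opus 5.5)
Throughout, write $\delta_k := (3C)^{k}\epsilon$ for $0 \leq k \leq d$, so that $\delta_d < d_{min}/2C_m$. This is exactly the threshold under which Lemma~\ref{lemma:corner2} applies. The plan is to track, round by round of Protocol~\ref{def:protocol}, a common simplex $\sigma_k$ of $\mathcal K$ with two properties. First, every process's current point lies within $\delta_k$ of $\sigma_k$. Second, the dimension of $\sigma_k$, or more precisely of the simplices the current points live in, shrinks. At the end all points are vertices within distance $\delta_d < d_{min}$ of a common simplex. A vertex at distance $< d_{min}$ from a simplex must belong to it, by the argument opening the proof of Lemma~\ref{lemma:corner2}. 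Hence all $z_i$ lie in one simplex, which is the claim.

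Concretely, the invariant I will maintain is the following. After the iterations for indices $d, \dots, i+1$ have been executed, there is a simplex $\sigma$ such that every current point $x_j$ lies in a face of $\sigma$ of dimension at most $i$ and satisfies $d(x_j,\sigma) \leq \delta_{d-i}$.

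\textbf{Initialization.} Fix process $1$ and let $\sigma$ be the carrier of $y_1$. Every $y_j$ is $\epsilon$-close to $y_1$, hence $\epsilon$-close to $\sigma$.

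\textbf{Inductive step.} Consider the iteration at index $i$. A process whose point is not in the interior of an $i$-simplex does nothing. A process whose point is in the interior of an $i$-simplex $\kappa$ projects onto a closest face $\tau$. The key estimate is that this move is short: $d(x, p_\tau x) = d(x,\partial\kappa)$. I will bound this by comparing with some other process's point. If that other point lies in a proper face of $\kappa$, or in a simplex meeting $\kappa$ only in a face, then Lemma~\ref{lemma:corner2} applied to $\kappa$ and the other carrier shows that $x$ is within $C\delta$ of their intersection, which is a proper face of $\kappa$. Hence the displacement is at most $C\delta$. Each point therefore moves by at most $C\delta_{d-i}$. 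Pairwise distances grow by at most $2C\delta_{d-i}$, and distances to the chosen common simplex grow to at most $(1+2C)\delta_{d-i} \leq 3C\delta_{d-i} = \delta_{d-i+1}$. I then update $\sigma$ to the intersection given by Lemma~\ref{lemma:corner2}, and its hypotheses hold because $\delta_{d-i+1} \leq \delta_d < d_{min}/2C_m$.

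\textbf{Main obstacle.} The hard case is when all processes currently sit in the interior of the same top simplex $\kappa$. No other carrier then forces a short projection, so the displacement cannot be bounded by comparison with another process. My plan for this case is to observe that being in the same simplex is already the goal: I will strengthen the invariant to "either all current points lie in a common simplex, or the distance bound holds." Once all points lie in a common simplex $\kappa$, each later projection maps a point into a face of its current simplex, and that simplex is a face of $\kappa$. So the points stay in faces of $\kappa$ forever, and in particular the outputs lie in $\kappa$. This case split should close the induction. The remaining care is the bookkeeping that each application of Lemma~\ref{lemma:corner2} is made with distances below $d_{min}/2C_m$. That is exactly what the factor $(3C)^d$ in the hypothesis on $\epsilon$ is designed to guarantee.
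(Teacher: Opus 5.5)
Your argument is essentially the paper's: the paper also maintains the disjunctive invariant (all points lie in a common simplex, or they are pairwise within $\epsilon(3C)^k$). It bounds each projection by $C$ times the current bound via Lemma~\ref{lemma:corner2}, applied to $\kappa$ and the carrier of a point outside $\kappa$, obtains the growth factor $2C+1\leq 3C$, and concludes with the $d_{min}$ argument. Your $\sigma$ bookkeeping needs repair, since the invariant as phrased says the points lie in faces of $\sigma$, and re-choosing $\sigma$ as ``the intersection'' is not justified for all points at once. It is easily fixed: either keep $\sigma$ fixed as the carrier of $y_1$ (each point's distance to it grows by at most $C\delta_{d-i}$ per round), or, as the paper does, use the pairwise bound to conclude that vertices pairwise closer than $d_{min}$ coincide.
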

\ifextended
\begin{proof}
Let us show by induction that after the $k$th round, the processes' points lie either in the same simplex (boundary included), or are $\epsilon(3C)^k$-close to each other. For the first round $k = 0$, the points are indeed $\epsilon$-close. Suppose the result holds for $d-1 > k \geq 0$. If all the points already lie in the same simplex $\kappa$, then whether or not they project, they remain in $\kappa$. Otherwise, if no points are projected this round, then they remain $\epsilon(3C)^k \leq \epsilon(3C)^{(k+1)}$-close to each other. Otherwise, for all points $a$ lying in the interior of a simplex $\kappa$ of dimension $d-k$ ($a$ will be projected on a closest face of $\kappa$):

Since all points are not in $\kappa$, there is a point $b$ lying in a simplex $\tau \nsubseteq \kappa$ of dimension less or equal than $ d-k$. By hypothesis, $a$ is $$\epsilon(3C)^k < \frac{d_{min}}{2C_m(3C)^{(d-k)}} \leq \frac{d_{min}}{2C_m}$$ close to $b$, hence to $\kappa$ ($a \in \kappa$) and $\tau$ ($b \in \tau$). Therefore, Lemma~\ref{lemma:corner2} applies and  $a$ is $\epsilon(3C)^kC$-close to $\kappa \cap \tau \neq \emptyset$. Since $\kappa \cap \tau$ is a sub-simplex of $\kappa$ of dimension at most $d- k -1$, $a$ is $\epsilon(3C)^kC$-close to at least one face of $\kappa$, and a projection of $a$ will be $\epsilon(3C)^kC$-close to $a$.

The above reasoning holds for any two points $a$ and $b$. Let $a'$ and $b'$ denote their projections if they project, or the original points otherwise. We finally have:
\begin{align*}
    (a',b') &\leq d(a',a) + d(a,b) + d(b, b')\\
            &\leq \epsilon(3C)^kC + \epsilon(3C)^k + \epsilon(3C)^kC\\
            &\leq \epsilon(3C)^{k+1}.
\end{align*}
Finally, let us show that after the last round, the output vertices $(z_i)_{i \in [n]}$ lie in a common simplex of $\mathcal K$. By the property we just proved, either this is immediate, or the vertices are $d_{min} > \epsilon(3C)^{(d-1)}$-close to each other. By definition of $d_{min}$ applied to the simplices $(\{z_i\})_{i \in [n]}$, all the $z_i$ are equal to a common vertex $z$, hence lie in the simplex $\{z\}$. \qed
\end{proof}
\fi

We can now describe our reduction from approximate agreement on CUB spaces to simplex agreement on $\mathcal K$, and prove Theorem~\ref{th:snd}. 
\begin{protocol}[Simplex agreement on $\mathcal K$]\label{def:protocol_sub}
    Let $\epsilon$ be the largest power of $1/2$ lesser than $\frac{d_{min}}{2C_m(3C)^d}$. Given our $n$-processes with input points $(x_i)_{i\in [n]}$ in $\mathcal K$, first compute points $(y_i)_{i\in [n]}$ in $\mathcal K$ that are $\epsilon$-close to each other, using our protocol for $\epsilon$-approximate agreement in $\mathcal K$. Then, apply Protocol~\ref{def:protocol} with the input points $(y_i)_{i\in [n]}$ to get our final output vertices $(z_i)_{i \in [n]}$.
\end{protocol}
\begin{proof}[of Theorem~\ref{th:snd}]
Let us show validity and agreement. For validity, if the $(x_i)_{i \in [n]}$ span a simplex $\kappa$ of $\mathcal K$, then since $\kappa$ is $\sigma$-convex, the  points $(y_i)_{i \in [n]}$ lie in $\kappa$, and so do the output vertices $(z_i)_{i \in [n]}$, since a point in $\kappa$ cannot exit $\kappa$ by projecting onto a sub-simplex. For agreement, this is Proposition~\ref{prop:common_simplex}.  \qed
\end{proof}

We then obtain Corollary~\ref{cor:weaker} from a reduction of Protocol~\ref{def:protocol_sub}: 

\begin{protocol}[Simplex agreement on $\mathcal C$ under weak validity]

Given $n$ processes with inputs vertices $(x_i)_{i\in [n]}$ of $\mathcal C$ (hence of $\mathcal K)$, start by solving simplex agreement on $\mathcal K$: each process $i \in [n]$ computes an output vertex $y_i$ of the subdivision $\mathcal K$. If $y_i$ is also a vertex of $\mathcal C$, the $i$th process outputs $y_i$. Otherwise, it outputs any vertex of the simplex of $|\mathcal C|$ containing $y_i$ in its interior. 
\end{protocol}
\begin{proof}[of Corollary~\ref{cor:weaker}]
Let us prove agreement and weak validity of simplex agreement on $\mathcal C$. For weak validity, if the $(x_i)_{i \in [n]}$ are the same vertex $v$ of $\mathcal{C}$, then the points $(y_i)_{i \in [n]}$ outputted by the protocol for approximate agreement on CUB spaces are also all $v$, and so are the output vertices $(z_i)_{i \in [n]}$, since no projection occurs.
For agreement, let $(z_i)_{i \in [n]}$ be vertices of the same simplex of $\mathcal K$. It suffices to observe that all the points $z_i$ that are not vertices of $\mathcal C$ are in the interior of the same simplex of $\mathcal C$.  \qed
\end{proof}

Since the simplices of $\mathcal C$ are not necessarily $\sigma$-convex, in general even if the input vertices $(x_i)_{i \in [n]}$ in $\mathcal C$ span a simplex $\kappa$ of $\mathcal C$, the points $(y_i)_{i \in [n]}$ outputted by the protocol for approximate agreement on the CUB spaces $|\mathcal C|$ are not necessarily contained in $\kappa$ anymore, and validity cannot be ensured.

\section{Conclusion}
\label{sec:conclusion}
Approximate agreement is one of the oldest problems in distributed computing, yet the spaces on which it is solvable still lack an exact characterization.
%We have studied the solvability of several approximate agreement tasks. 
In the first part we considered discrete spaces. We gave a complete topological characterization of the solvability of simplex agreement, which is undecidable already for three processes, but decidable for collapsible and shellable complexes. We then discussed implications for clique agreement, some new and some previously known.
In the second part, we introduced the general task of approximate agreement on CUB spaces, and an explicit algorithm for solving this problem, which we applied to simplex agreement, providing explicit protocols for solving two weaker versions of approximate agreement on collapsible simplicial complexes.

There are many interesting questions left.
%Our results assume a wait-free model, but we could extend them to models where at most $t$ processes may crash, for instance via BG simulation~\cite{herlihy2013distributed}.
The usefulness of shellability for clique agreement is still unclear: no family of graphs is known whose clique complex is shellable, yet is not covered by the previous combinatorial criteria.
%For instance, Tur\'an graphs do have shellable clique complexes, but we conjecture that the solvability of clique agreement on these clique complexes is completely determined by the clique containment condition together with the existence of AER labelings.
Similarly, the $k$-clique containment condition gives an impossibility bound depending on some chromatic number~\cite{Liu22}, but its relationship to $k$-connectivity is not obvious.

Approximate agreement on CUB spaces is always solvable, generalizing classical and multidimensional approximate agreement but not generalized approximate agreement~\cite{herlihy1993asynchronous}, which allows for holes in the space. Therefore, extending our continuous framework to formally prove the claim in~\cite{herlihy1993asynchronous} would be valuable.
%of interest. However, a metric space with topological holes cannot be CUB.
%We could mimic the setting of~\cite{herlihy1993asynchronous} by restricting to a subset $A$ of allowed points in a CUB space $X$, and using their definition of holes. This is unsatisfactory, however, as it captures only $0$-dimensional holes. 
Finally, 
the weak validity condition in Corollary~\ref{cor:weaker} is not well understood. Its converse fails: $2$-set agreement for $3$ processes is a special case of simplex agreement on the hollow triangle, which is not collapsible, yet the problem is solvable under weak validity.
More generally, studying the solvability of set agreement in different spaces would be interesting.

\begin{credits}
\subsubsection{\ackname}
The author thanks Zhenghao Hu, J\'er\'emy Dubut, J\'er\'emy Ledent and Eric Goubault for the numerous interactions during the preparation of this work, and the anonymous reviewers for their valuable comments and suggestions.

\subsubsection{\discintname}
The authors declare that they have no known competing financial interests or
personal relationships that could have appeared to influence the work reported
in this paper.

\end{credits}

\newpage

\bibliographystyle{splncs04}
\bibliography{references}

\end{document}

- what it mean for a task to be solvable
 - p3 what does "n-connected" mean?
    Added some context and a ref to the definition in the next section
    
    - p4 "the usual wait-free asynchronous read write shared memory model" some more background about this model would really help.
    Rewrote the section
    
    - p4 "the former" what?
    Rewrote the section
    
    - p4 "the first two models" which two models?
    Rewrote the section
    
    - p6 Def 5 should some of the X's be C's? 
    Indeed, fixed.
    
    - p7 Def 7 what is a subcomplex? why are they represented by O^2, I guess pairs? 
    Defined subcomplexes in Def 2 and made the notation more transparent
    
    - p7 "solvability" what does it mean for a task to be solvable?
    Added a sentence in the introduction.
    
    - p8 "the latter is efficiently computable" is this a known result? obvious?
    Added a reference.
    
    - p10 "CUB spaces" it would be useful to define this earlier.
    Changed the introduction of the section accordingly (in particular removed confusing details about CUB spaces)
    
    - p10 "a subdivision K of C that is CUB" what is a subdivision? how is it a metric space?
    Subivision are defined in definition 3 as a geomtric simplicial complexes. The definition of the piecewise flat equilateral pseudometric on geometric simplicial and cubical complexes has been added in the introduction section.
    
    - p11 "CAT(0)" what are these spaces?
    This notion is not used in the paper, I only mentioned it because it is much more known that CUB spaces, whom with they share the same intuition. 
    
    - p11 "cubical complexes" what are these?
    Added a more expicit definition, and a intuitive explanation. 
    
    - p13 "PL-collapsible"/"PL-homeomorphic"/"cubical complex"/"equilateral metric" are all undefined
    The definitions have been added. TODO for PL
    
    - p16 I wish the authors reduced the future directions in order to make the paper more self-contained.
    All the points mentionned above have been adressed.

    p2:

l-3: what do you mean "claimed"? 
The cited reference contains a claim that the proposition is true, but do not contains the proof.

p3:

When I read this it was my impression that Thm.1 is new; but on p.7 I learned that it is not.  Please indicate already here ("from [23]") that Thm.1 is not new. 
Added "from [23]".

p4:

not "Nowak et al", there's only Rybicky
Added Rybicky

p5:

l5: "... write*s* its view into /a/ the ..." (add "s", remove "a")
Done

Def.3 may not be quite clear to the uninitiated: what is a "geometric simplex"? And there's some mix-up between K and L at the end, no? In any case L is used ("homeomorphic to L") before it's introduced
Changed the definition.

p6:

l3: must extend/s/
Changed the sentence.

proof of Prop.1: citation for Whitehead?
Added a reference.

Def.6:

"in any other simplex" sounds better to my ears
I agree, changed.

sequence of collapse*s*
Done.

end of Sec.3: of course all that would be easier to explain if you'd use a presheaf setting, but that's maybe not for this paper...
Indeed :)

p7:

why the "O^2" in Def.7?
This is a notation from the book "Distributed computing through combinatorial topology". 2 should be understood as the two elements set. I replaced it with a more transparent notation.

"This colorless ... i.e. triple*s*"
Done

"Theorem 1 ... -process/es/"
Done

"for n\in\Nat *and* all ... otherwise"
Done

"Theorem 2 ... is a*n* (n+1)-process/es/ ..."
Done

p8:

"gular homology ... \n *to the* simplicial ... computable."
Done

"If C is non-... \n *then* homology ... groups"
Done 

This Hurewicz argument is very nice, I had not seen this before.
Thank you :)

p9:

what do you mean "completely decidable"?
I had in mind that we always have a decidable upper bound using homology. I removed the "completely".

"finite triangu... *(*n+1*)-*sphere ... not"
Done

(same problem in l-4)
Done

p10:

"We proved our ... not provide/s/"
Done

p11:

The items in Def.9 go into the left margin, not nice
Fixed

"generalize several ... \n *and* more generally ... there"
Changed the sentence

p12:

"pair of inputs ... each": too much intuition! ��
Done

I just notice that your proofs seem to miss the final \qed's
Fixed.

p13:

and please write "finite-dimensional" with a hyphen
Done

"sion K of ... , the proof of *Theorem* 6 is"
Done

"constructive, ... B": something missing in this phrase, it ungrammatical
Done 

"Therefore, ... a CAT*(0)* metric ... consider"
Done

p14:

Maybe state Def.11 as plain text?  It is quite unwieldy.  And, technically, it's not the *infimum* itself which you call projection, but the *point* where it is attained (which, as you should say, is unique).
Done

Lemma 3 is difficult to parse, please reformulate (using a few "Let", for example).  And d_min is unquantified
d_min is defined just above. I reformulated Lemma 3.

p15:

I abhor the cursive tt-font used in the algorithm in Def.12
Changed the font

"other, then ... 12 lie/s/ ... simplex."
Not sure where this is, but i fixed a lies->lie at some point.
p17:

Lots of Capitalization Problems in bibliography (for instance, [15], [16], [17], [18]; but there are more)
Fixed.

Nitpicking:
there are a few typos in Section 5.2/5.3 ("our reduction does not show that |C| is itself is CUB"; "one can check that the output subdivision K" appears to be missing a verb).
Done

Is it true that the midpoints in Protocol 1 can be calculated in any order, not just p_1|(p_2|(p_3| etc.?  If yes, then please say so
Not in general, but yes if X is also a barycentric algebra for sigma! This is the content of Remark 2, I made it more explicit.

Thm.6: I don't think "PL" has been introduced

The paragraph starting "Explicitly" has lots of scary unexplained words in it; maybe that's difficult to avoid.  But why has the "metric" of Thm.6 turned into a "pseudometric"?
Added more explanation. In Thm.6 it should indeed by pseudometric.

No examples nor real protocols are given. 

This is a very nice paper which advances the topological theory of distributed computing, but the writing is sketchy in some places and the paper needs more context, definitions, and explanations in order to be accessible to the ICTAC audience.

TODO add the necessary citations, detail a bit more one of the argument in annex and , also tweak the future work.